\newif\ifcompleteproof
  \renewcommand \iff    {\Longleftrightarrow}           
  \newcommand \RelView  {{\sc Rel\-View}}               
  \newcommand \Forall   {{\forall}\,}
  \newcommand \Exists   {{\exists}\,}
\renewcommand \L        {{\sf L}}
\renewcommand \O        {{\sf O}}
  \newcommand \I        {{\sf I}}
  \newcommand \Member   {\mbox{\sf M}}
  \newcommand \M        {\mbox{\sf M}}
  \newcommand \Size     {{\sf S}}
  \newcommand \Neg[1]   {{\,\overline{#1}\,}}
  \newcommand \Comp     {;}
  \newcommand \Transp[1]{#1^{\sf T}}
  \newcommand \Rtup[2]  {{[{#1},{#2}]\!]}}
  \newcommand \Inj      {\mbox{\sl inj}}
  \newcommand \ONE      {{\bf 1}\!\!\!{\bf 1}}  
  \newcommand \NAT      {{\mathbb N}}
  \newcommand \REL[2]   {\mbox{${#1}$$\,\leftrightarrow\,$${#2}$}}
  \newcommand \VEC[1]   {\mbox{${#1}$$\,\leftrightarrow\,$${\ONE}$}}
  \newcommand \Rel      {\mbox{\sl rel\/}}
  \newcommand \Vec      {\mbox{\sl vec\/}}
  \newcommand \Syq      {\mbox{\sl syq\/}}
  \newcommand \FS       {\mathfrak{S}}
  \newcommand \PNAA     {\mbox{$2^N$$\!\times\!$$A^2$}}
  \newcommand \XY       {\mbox{$X$$\times$$Y$}}
  \newcommand \AaAa     {\mbox{$A^2$$\times$$A^2$}}
  \newcommand \sol      {\mbox{\textit{sol}}}
  \newcommand \cand     {\mbox{\textit{cand}}}
\newtheorem{Theorem}{Theorem}[section]
\newtheorem{Lemma}{Lemma}[section]
\newtheorem{Definition}{Definition}[section]
\spnewtheorem{Example}{Example}[section]{\bfseries}{\rmfamily}
\newcommand{\complexityclassname}[1]{\ensuremath{\mathrm{#1}}}
\newcommand{\NP}{\complexityclassname{NP}}
\begin{document}

\title{Relation-algebraic and Tool-supported Control of Condorcet Voting}
\titlerunning{Relation-algebraic and Tool-supported Control of Condorcet Voting}
\author{Rudolf Berghammer and Henning Schnoor}
\institute{Institut f\"ur Informatik, Christian-Albrechts-Universit\"at Kiel \\
       Olshausenstra\ss{}e 40, 24098 Kiel, Germany \\
       }

\date{}
\maketitle

\begin{abstract}
We present a relation-algebraic model of Condorcet voting and, based on it,
relation-algebraic solutions of the constructive control problem via 
the removal of voters.
We consider two winning conditions, viz.\ to be a Condorcet 
winner and to be in the (Gilles resp.\ upward) uncovered set.
For the first condition the control problem is known to be NP-hard; for 
the second condition the NP-hardness of the control problem is shown in 
the paper.
All relation-algebraic specifications we will develop in the paper immediately
can be translated into the programming language of the BDD-based computer system
\RelView.
Our approach is very flexible and especially appropriate for prototyping and 
experimentation, and as such very instructive for educational purposes.
It can easily be applied to other voting rules and control problems.
\end{abstract}

\section{Introduction}\label{Sec1} 

Elections have been studied by scientists from different disciplines for more 
than a thousand years. 
In addition to the obvious moral and political issues, elections also give rise 
to several computational questions, which are studied in the field of Computational 
Social Choice. 
The most prominent of these questions is the requirement of an algorithm that 
efficiently computes the winner(s) of an election. 
Surprisingly, such algorithms do not exist for all natural election systems,
see \cite{HemaspaandraHemaspaandraRothe}
for an example. 
However, elections also give rise to computational problems which ideally should 
be \emph{hard} to solve:

\begin{itemize}
\item The \emph{manipulation problem}
      (see \cite{BartholdiOrlin})
      asks to determine a way for a group of voters to vote that serves their interest 
      best, even though the vote might not represent their true preferences. 
      Unfortunately, classical results show that every reasonable voting system gives voters 
      incentives to vote strategically in this way (Gibbard-Satterthwaite theorem,
      cf. \cite{Gibbard,Satterthwaite}). 
\item The \emph{control problem} (see, e.g.,\cite{BarTovTri}) asks for determining 
      a way for the coordinator of an election to set up the election in a way that 
      serves his or her personal interest. 
      In order to achieve this, the coordinator might remove or add alternatives or voters 
      from the election or partition  the election.
\end{itemize}
Following the above-mentioned paper \cite{BartholdiOrlin}, 
numerous papers have studied the complexity of manipulation and control problems for elections 
(see, e.g., 
\cite{ConitzerSandholgLang,FaHeSc,HemHemRot}). 
For many election systems, it can be shown that the studied control or manipulation problem is 
\NP-hard, and thus the election system is deemed to be `secure' against this attempt 
to influence the outcome of the election.
However, it has long been observed that efficient algorithms that work for \emph{many} cases 
can still exist for \NP-hard problems, the very successful history of SAT solvers being 
an impressive example. 
In the context of Computational Social Choice, 
\cite{ConitzerSandholm} 
demonstrates a fast and very simple algorithm that works correctly on `most' inputs 
(according to a suitably chosen probability distribution) and is allowed to compute 
an incorrect result on the remaining inputs.

In this paper, we study an alternative approach to show that \NP-hard election problems may 
be solvable in practice.
We apply the Computer Algebra system \RelView~(see \cite{BerNeu,RELVIEW}), which uses 
mathematical tools from relation algebra in the sense of \cite{SchStr,Sch}, to implement 
algorithms for the control problem of an election. 
Our implementations are provably correct for all instances; hence, as the problems we 
study are \NP-hard, our algorithms do not run in polynomial time in general. 
Instead, we rely on \RelView's optimization to exploit the simple structure of most 
practical instances of the problems we study, which allows for an algorithmic treatment.

Concretely, we study the following problem: 
Given an election consisting of a set of alternatives (also sometimes called candidates), 
voters along with information on how they will vote, and a prefered alternative $a^*$, 
determine a minimum set $Y$ of voters such that removing all voters in $Y$ makes $a^*$ win 
the election. 
The election system we study is the Condorcet voting system with the uncovered set winning 
condition (in case there is no Condorcet winner).
To the best of our knowledge, this is the first paper where a relation-algebraic approach 
is used to solve problems related to elections that directly take the individual votes 
into account. 
An advantage of our approach is that it is very general 
and allows to treat related problems for different election systems with only small 
modifications.
In particular, we could also treat elections in a generalized setting, where voters' 
preferences are not linear orders (such a setting is studied 
in~\cite{FaliszewskiHemaspaandraHemaspaandraRothe}). 
A further advantage is that the correctness proofs for our algorithms are formalized 
in such a way that, in principle, their automatic verification is possible.
Our results and the performance of our algorithms demonstrate that Computer Algebra tools 
can be used successfully to solve \NP-hard problems, where the data structures used 
in the Computer Algebra package automatically allow to exploit the `easyness' that 
may be present in practical instances.
In our case, \RelView\ uses BDDs to efficiently represent relations that are exponential 
in the input size.
Thus, relation-algebraic algorithms can be obtained without specific knowledge 
about the problem domain.


\section{Relation-algebraic Preliminaries}\label{Sec2} 

Given sets $X$ and $Y$, we write $R : \REL{X}{Y}$ if $R$ is a (binary) 
relation with source $X$ and target $Y$, i.e., a subset of  $\XY$.
If the sets of $R$'s \emph{type} $\REL{X}{Y}$ are finite, then
we may consider $R$ as a Boolean matrix.
Since such an interpretation is well suited for ma\-ny purposes and 
also used by \RelView\ as the main possibility to visualize relations, 
in this paper we frequently use matrix terminology and notation.  
Especially, we speak about the entries, rows and columns of a relation/matrix 
and write $R_{x,y}$ instead of $(x,y) \in R$ or $x\,R\,y$.
We assume the reader to be familiar with the ba\-sic operations on relations, 
viz.~$\Transp{R}$ (\emph{transposition}), $\Neg{R}$ (\emph{complement}), 
$R \cup S$ (\emph{union}), $R \cap S$ (\emph{intersection}) and $R \Comp S$ 
(\emph{composition}), the predicates $R \subseteq S$ (\emph{inclusion}) 
and $R = S$ (\emph{equality}), and the special re\-la\-ti\-ons $\O$ 
(\emph{empty relation}), $\L$ (\emph{universal relation}) and $\I$ 
(\emph{identity relation}).
In case of $\O$, $\L$ and $\I$ we overload the symbols, i.e., avoid the binding 
of types to them.

For $R : \REL{X}{Y}$ and $S : \REL{X}{Z}$, by
$\Syq(R,S) = \Neg{\Transp{R}\Neg{S}} \cap \Neg{\Neg{\Transp{R}}S}$
their \emph{symmetric quo\-ti\-ent} $\Syq(R,S) : \REL{Y}{Z}$ is defined.
In the present paper we will only use its point-wise description, saying that 
for all $y \in Y$ and $z \in Z$ it holds $\Syq(R,S)_{y,z}$ iff for all $x \in X$ 
the relationships $R_{x,y}$ and $S_{x,z}$ are equivalent.

In relation algebra \emph{vectors} are a well-known means to model subsets of a given set $X$.
Vectors are relations $r : \VEC{X}$ (we prefer in this context lower case 
letters) with a specific singleton set $\ONE = \{\bot\}$ as target.
They can be considered as Boolean column vectors.
To be consonant with the usual notation, we omit always the 
second subscript, i.e., write $r_x$ instead of $r_{x,\bot}$.  
Then $r$ \emph{describes} the subset $Y$ of $X$ if for all $x \in X$ it
holds $r_x$ iff $x \in Y$.
A point $p : \VEC{X}$ is a vector with precisely one 1-entry.
Consequently, it describes a singleton subset $\{x\}$ of $X$ and we then say 
that it describes the element $x$ of $X$.
If $r : \VEC{X}$ is a vector and $Y$ the subset of $X$ it describes,
then $\Inj(r) : \REL{Y}{X}$ denotes the \emph{embedding relation} of $Y$ 
into $X$. 
In Boolean matrix terminology this means that $\Inj(r)$ is obtained from 
$\I : \REL{X}{X}$ by deleting all rows which do not correspond to an 
element of $Y$ and point-wisely this means that for all $y \in Y$ and $x \in X$
it holds $\Inj(r)_{y,x}$ iff $y = x$.

In conjunction with powersets $2^X$ we will use \emph{membership relations} 
$\Member : \REL{X}{2^X}$ and \emph{size comparison relations} 
$\Size : \REL{2^X}{2^X}$.
Point-wisely they are defined for all $x \in X$ and $Y, Z \in 2^X$ as follows:
$\Member_{x,Y}$ iff $x \in Y$ and $\Size_{Y,Z}$ iff $|Y| \leq |Z|$.
A combination of $\M$ with embedding relations allows a \emph{column-wise 
enumeration} of an arbitrary subset $\FS$ of $2^X$.
Namely, if the vector $r : \VEC{2^X}$ describes $\FS$ in the sense defined 
above and we define $S = \Member \Comp \Transp{\Inj(r)}$, then we get 
$\REL{X}{\FS}$ as type of $S$ and that for all $x \in X$ and $Y \in \FS$ 
it holds $S_{x,Y}$ iff $x \in Y$. 
In the Boolean matrix model this means that the sets of $\FS$ are precisely 
described by the columns of $S$, if the columns are considered as vectors 
of typs $\VEC{X}$.

To model direct products $\XY$ of sets $X$ and $Y$ relation-algebraically, 
the \emph{projection relations} $\pi : \REL{\XY}{X}$ and $\rho : \REL{\XY}{Y}$ 
are the convenient means.
They are the relational variants of the well-known projection functions and,
hence, fulfil for all $u \in \XY$, $x \in X$ and $y \in Y$ the following
equivalences:
$\pi_{u,x}$ iff $u_1 = x$ and $\rho_{u,y}$ iff $u_2 = y$.
Here $u_1$ denotes the first component of $u$ and $u_2$ the second component.
As a general assumption, in the remainder of the paper we always assume a 
pair $u$ to be of the form $u = ( u_1, u_2 )$.
Then $\hat{u}$ denotes the \emph{transposed pair} $(u_2,u_1)$.
The projection relations enable us to specify the well-known pairing operation
of functional programming relation-algebraically.
The \emph{pairing} of $R : \REL{Z}{X}$ and $S : \REL{Z}{Y}$ is defined as 
$\Rtup{R}{S} = R \Comp \Transp{\pi} \cap S \Comp \Transp{\rho} : \REL{Z}{\XY}$.
where $\pi$ and $\rho$ are as above.
Point-wisely this definition says that 
$\Rtup{R}{S}_{z,u}$ iff $R_{z, u_1}$ and $S_{z, u_2}$,
for all $z \in Z$ and $u \in \XY$.
Based on $\pi$ and $\rho$ we are also able to establish a bijective correspondence 
between the relations of type $\REL{X}{Y}$ and the vectors of type $\VEC{\XY}$.
The transformation of $R : \REL{X}{Y}$ into its \emph{corresponding vector}
$\Vec(R) : \VEC{\XY}$ is given by 
$\Vec(R) = (\pi \Comp R \cap \rho) \Comp \L$
and the step back from $r : \VEC{\XY}$ to its \emph{corresponding relation} 
$\Rel(r) : \REL{X}{Y}$ by
$\Rel(r) = \Transp{\pi} \Comp (\rho \cap r \Comp \L)$.
Point-wisely this means that for all $u \in \XY$ the following equivalences
are true:
$\Vec(R)_u$ iff $R_{u_1, u_2}$ and $\Rel(r)_{u_1, u_2}$ iff $r_u $.

\section{A Relation-algebraic Model of Condorcet Voting}\label{Sec3} 

Usually, an election consists of a non-empty and finite set $N$ of voters 
(agents), normally $N = \{1,\ldots,n\}$, a non-empty and finite set $A$ of 
alternatives (candidates), the individual preferences (choices, wishes) of the 
voters and a voting rule that aggregates the winners from the individual 
preferences.
A well-known voting rule is the \emph{Condorcet voting rule}.
Here it is usually assumed that each voter ranks the alternatives from top to 
bottom, i.e., the individual preferences of the voters $i \in N$ are expressed 
via linear strict orders $>_i$ $:$ $\REL{A}{A}$.
{From} them the \emph{dominance relation} $C : \REL{A}{A}$ is computed that 
specifies the collective preferences.
An \emph{instance} of a Condorcet election consists of the sets $N$, $A$, 
and the relations $>_i$ for all $i\in N$.
In the following we consider the approach that $C_{a,b}$ iff the number 
of voters $i$ with $a >_i b$ is (strictly) greater than the number 
of voters $i$ with $b >_i a$.  
In this case we also say that $a$ \emph{beats} $b$ with $p$ points, where $p$ is 
the (positive) difference between these numbers.
It is known that $C$ may contain cycles and that an alternative that dominates 
all other ones -- a so-called \emph{Condorcet winner} -- does not necessarily exist.
To get around this problem, in the literature so-called \emph{choice sets}
have been introduced which take over the role of the best alternative and 
specify the winners (see e.g., \cite{Las} for more details).
In this paper, we will study the choice set \emph{Uncovered Set}.

For a relation-algebraic treatment of Condorcet voting, we first model its 
input, i.e., the individual preferences of the voters, accordingly.

\begin{Definition}\label{DefPC}
The relation $P : \REL{N}{A^2}$ \emph{models} the instance 
$(N,A,(<_i)_{i\in N})$ of a Condorcet election if $P_{i,u}$ is equivalent 
to $u_1 >_i u_2$, for all $i \in N$ and $u \in A^2$.
\end{Definition}

\noindent
In the following \RelView\ picture an input relation $P$ is shown.
The labels of the rows and columns indicate that the voters are the natural
numbers from 1 to 13 and the alternatives are the eight letters from $a$ 
to $h$.
$$
\includegraphics[scale=1.109]{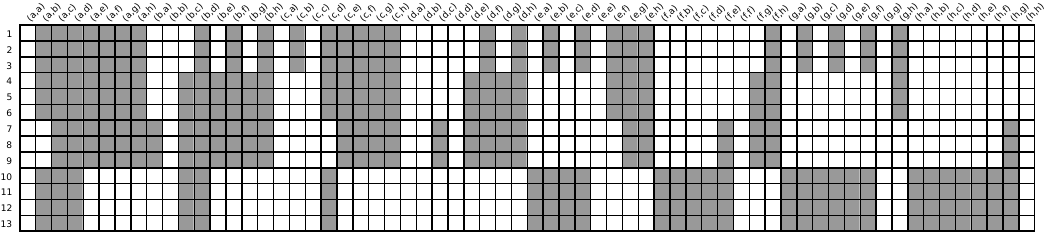}
$$
It is troublesome to identify from this picture the individual preferences.
But if we select the single rows, transpose them to obtain vectors of 
type $\VEC{A^2}$ and apply the function $\Rel$ of Section \ref{Sec2} to 
the latter, then \RelView\ depicts the individual preferences as Boolean 
matrices.
For the rows 1, 4, 7 and 10 we get, in the same order, the following 
Boolean matrices for $>_1, >_4, >_7$ and $>_{11}$:
$$
\includegraphics[scale=0.249]{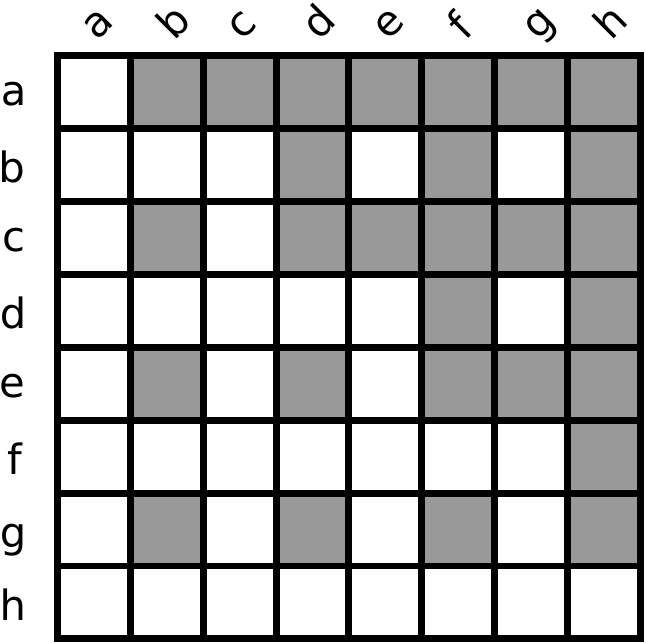} \qquad
\includegraphics[scale=0.249]{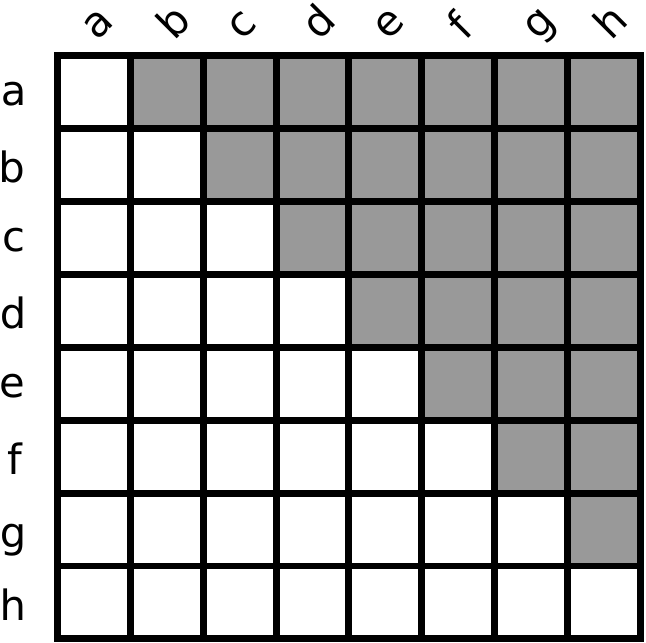} \qquad
\includegraphics[scale=0.249]{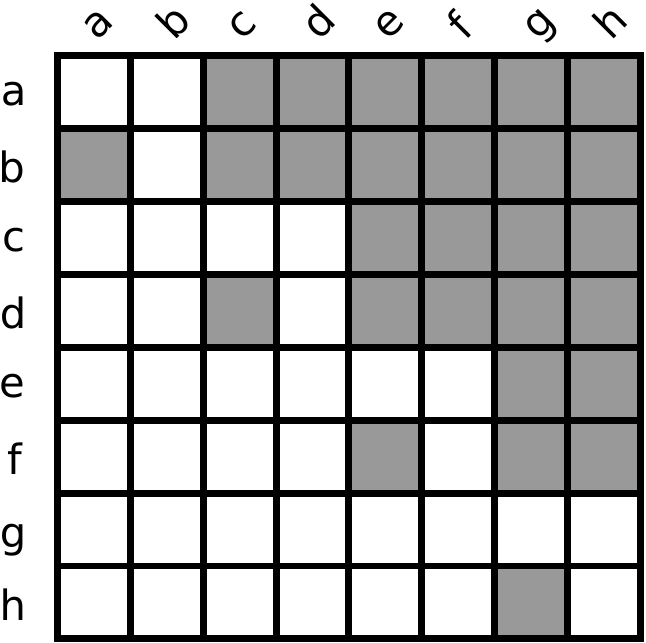} \qquad
\includegraphics[scale=0.249]{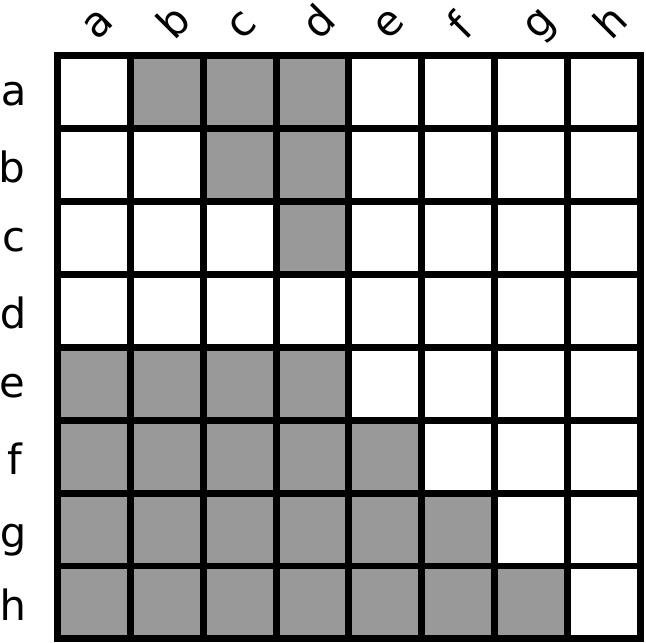}
$$
Now, the preferences of the single voters are easy to see\footnote{%
  A still more appropriate method is to compute for each relation $>_i$ its 
  Hasse diagram in the sense of \cite{SchStr} and to draw the latter in 
  \RelView\ as directed graphs.}. 
Voters 1 to 3 rank their alternatives from top to bottom as $a,c,e,g,b,d,f,h$,
voters 4 to 6 as $a,b,c,d,e,f,g,h$, voters 7 to 9 as $b,a,d,c,f,e,h,g$ and
the remaining voters 10 to 13 as $h,g,f,e,a,b,c,d$.
The procedure also shows how to construct, in general, the input 
$P : \REL{N}{A^2}$ from strict orders $>_i$ $;$ $\REL{A}{A}$ by inverting it.
We have to number the voters from 1 to $n$, then to transform each 
relation $>_i$ into $\Transp{\Vec(>_i)} : \REL{\ONE}{A^2}$, i.e., the 
transpose of its corresponding vector, and finally to combine the transposed 
vectors row by row into a Boolean matrix.
The latter means that we have to form the relation-algebraic sum
$\Transp{\Vec(>_1)} + \cdots + \Transp{\Vec(>_n)}$.
We won't to go into details with regard to sums of relations and refer to 
\cite{Sch}, where a relation-algebraic specification 
via injection relations is given.
Instead, we demonstrate how to get from the individual preferences relation $P$ 
the collective preferences, i.e., the dominance relation $C$.
In what follows, we assume the projection relations $\pi, \rho : \REL{A^2}{A}$ 
of the direct product $A^2$ to be at hand as well as the membership relation 
$\M : \REL{N}{2^N}$ and the size comparison relation $\Size : \REL{2^N}{2^N}$.
Each of these relations is available in \RelView\ via a pre-defined function
and their BDD-implementations are rather small.
See \cite{Leoniuk,Milanese} for details.

\begin{Theorem}\label{CV1}
Suppose that $P : \REL{N}{A^2}$ models an instance of Condorcet voting.
If we specify relations $E, F : \REL{A^2}{2^N}$ and $C : \REL{A}{A}$ by
$$
E = \Syq(P,\M)
\quad
F = \Syq(P \Comp \Rtup{\rho}{\pi},\M)
\quad
C = \Rel((E \cap F \Comp (\Size \cap \Neg{\Transp{\Size}})) \Comp \L),
$$
then $C_{u_1,u_2}$ is equivalent to
$|\{i \in N \mid P_{i,u}\}| > |\{i \in N \mid P_{i,\hat{u}}\}|$,
for all $u \in A^2$.
\end{Theorem}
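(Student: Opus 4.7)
The plan is to unfold the point-wise meaning of each subterm in the definitions of $E$, $F$ and $C$, so that the entire relation-algebraic expression collapses to the required counting comparison. First I would interpret the two symmetric quotients by means of the point-wise characterization of $\Syq$ recalled in Section~\ref{Sec2}. For $u \in A^2$ and $Y \in 2^N$, the entry $E_{u,Y} = \Syq(P,\M)_{u,Y}$ holds iff for every voter $i \in N$ the equivalence $P_{i,u} \iff i \in Y$ is true, which means $Y = \{i \in N \mid P_{i,u}\}$. Thus $E$ associates with each pair $u$ the unique set of voters preferring $u_1$ to $u_2$.

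Next I would give the analogous interpretation of $F$, and for this the essential sub-step is the point-wise computation of $P \Comp \Rtup{\rho}{\pi}$. Using the point-wise definitions of $\pi$, $\rho$ and of the pairing from Section~\ref{Sec2}, for $v,u \in A^2$ one gets $\Rtup{\rho}{\pi}_{v,u}$ iff $v_2 = u_1$ and $v_1 = u_2$, i.e.\ iff $v = \hat{u}$. Consequently $(P \Comp \Rtup{\rho}{\pi})_{i,u}$ reduces to $P_{i,\hat{u}}$, so that $F_{u,Y}$ is equivalent to $Y = \{i \in N \mid P_{i,\hat{u}}\}$, the set of voters preferring $u_2$ to $u_1$.

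Now I would combine these ingredients with the size comparison. By its defining equation, $(\Size \cap \Neg{\Transp{\Size}})_{Y,Z}$ holds iff $|Y| \le |Z|$ and not $|Z| \le |Y|$, i.e.\ iff $|Y| < |Z|$. Therefore $(F \Comp (\Size \cap \Neg{\Transp{\Size}}))_{u,Z}$ is true iff $|\{i \mid P_{i,\hat{u}}\}| < |Z|$. Intersecting with $E$ retains the unique $Z = \{i \mid P_{i,u}\}$, so the row of $u$ in $E \cap F \Comp (\Size \cap \Neg{\Transp{\Size}})$ contains a $1$-entry precisely when $|\{i \mid P_{i,\hat{u}}\}| < |\{i \mid P_{i,u}\}|$. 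Post-composing with $\L$ projects away the $Z$-component and yields a vector in $\VEC{A^2}$ that is true at $u$ iff that strict inequality holds; applying $\Rel$ then turns this vector back into a relation on $A$ with $C_{u_1,u_2}$ true iff the vector entry at $u = (u_1,u_2)$ is true, which is exactly the claim.

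I expect the only non-routine step to be the computation of $P \Comp \Rtup{\rho}{\pi}$, because this is where the swap from $u$ to $\hat{u}$ is engineered via the pairing of the projections in reversed order; everything else is a direct substitution of previously established point-wise characterizations and of the defining equations of $\Member$, $\Size$ and $\Rel$.
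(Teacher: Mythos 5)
Your proposal is correct and follows essentially the same route as the paper's proof: point-wise unfolding of the two symmetric quotients (so that $E$ and $F$ pick out the unique sets $\{i \mid P_{i,u}\}$ and $\{i \mid P_{i,\hat{u}}\}$), the observation that $\Size \cap \Neg{\Transp{\Size}}$ expresses strict cardinality comparison, and then collapsing the composition with $\L$ and the application of $\Rel$ into the claimed inequality. The only difference is that you explicitly verify that $\Rtup{\rho}{\pi}$ relates $u$ exactly to $\hat{u}$, a detail the paper merely asserts, which is a welcome addition rather than a deviation.
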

\begin{proof}
For the given $u \in A^2$ we prove in a preparatory step for all $Y \in 2^N$ 
that
$$
\begin{array}{rl}
       E_{u,Y} 
\iff & \Syq(P,\M)_{u,Y} \\
\iff & \Forall i \in N : 
       P_{i,u} \leftrightarrow \M_{i,Y} \\[0.99mm]
\iff & \Forall i \in N : 
       P_{i,u} \leftrightarrow i \in Y \\[0.99mm]
\iff & \{i \in N \mid P_{i,u}\} = Y.
\end{array}
$$
Using that the exchange relation $\Rtup{\rho}{\pi} : \REL{A^2}{A^2}$ relates 
the pair $u$ precisely with its transposition $\hat{u} = (u_2,u_1)$, in a rather 
similar way we can prove that for all $Z \in 2^N$ the following property holds:
$$
F_{u,Z} \iff \{i \in N \mid P_{i,\hat{u}}\} = Z
$$
By means of these two auxiliary results, we now conclude
the proof as follows:
$$
\begin{array}[b]{rl@{}l}
~      C_{u_l,u_2}
\iff & \Rel((E \cap F \Comp (\Size \cap \Neg{\Transp{\Size}})) \Comp \L
           )_{u_1,u_2} & \\[0.99mm]
\iff & ((E \cap F \Comp (\Size \cap \Neg{\Transp{\Size}})) \Comp \L
       )_{u} \\[0.99mm]
\iff & \Exists Y \in 2^N : 
               E_{u,Y} \wedge 
               (F \Comp (\Size \cap \Neg{\Transp{\Size}}))_{u,Y} \wedge 
               \L_Y & \\[0.99mm]
\iff & \Exists Y \in 2^N : 
               E_{u,Y} 
               \wedge 
               \Exists Z \in 2^N : 
                   F_{u,Z} \wedge \Size_{Z,Y} \wedge \neg \Size_{Y,Z} & \\[0.99mm]
\iff & \Exists Y \in 2^N : 
               E_{u,Y} 
               \wedge 
               \Exists Z \in 2^N : 
                   F_{u,Z} \wedge |Z| \leq |Y| \wedge |Y| >|Z| & \\[0.99mm]
\iff & \Exists Y, Z \in 2^N :                        
               \{i \in N \mid P_{i,u}\} = Y 
               \wedge
               \{i \in N \mid P_{i,\hat{u}}\} = Z 
               \wedge 
               |Z| < |Y| & \\[0.99mm]
\iff & |\{i \in N \mid P_{i, u}\}| > |\{i \in N \mid P_{i,\hat{u}}\}| & ~\Box
\end{array}
$$
\end{proof}

\noindent
The specifications of Theorem \ref{CV1} can be executed by means of \RelView\ 
after a straightforward translation into its programming language.
In case of the above input relation $P$ the tool computed the
following dominance relation $C$.
{From} the first row of $C$ we see that alternative $a$ is tha Condorcet winner
since it dominates all other alternatives.

\medskip

\centerline{
\includegraphics[scale=0.249]{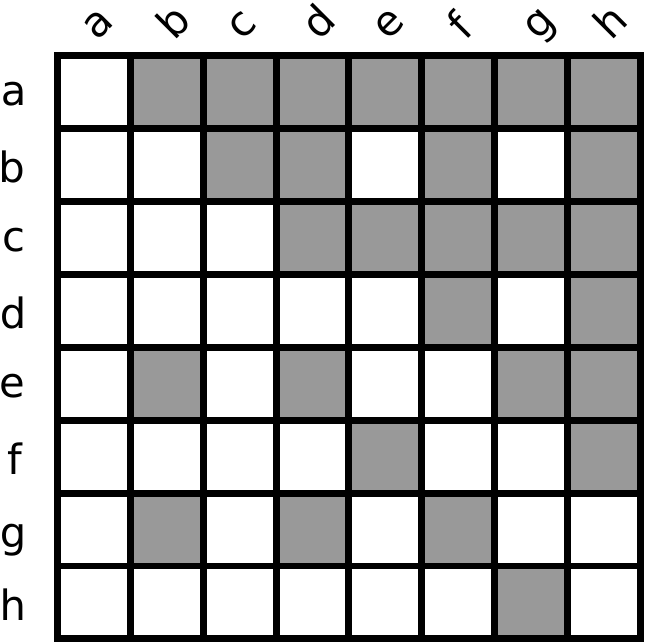}
}

\medskip

\noindent
This relation is not only asymmetric (i.e., satisfies $C \cap \Transp{C} = \O$) 
but also \emph{complete} (i.e., satisfies $\Neg{\I} \subseteq C \cup \Transp{C}$).
Altogether, $C$ is a \emph{tournament} relation and this property implies the 
uniqueness of a Condorcet winner in the case that one exists.
How to compute, in general, from the dominance relation $C$ the choice sets using
relation-algebraic means is demonstreted in \cite{BerRusSwa}.

\section{Control of Condorcet Voting by Deleting Voters}\label{Sec4} 

We only consider the constructive variant of the control problem for Condorcet 
voting, where control is done by deleting voters.
Usually, the task is formulated as a minimization-problem:
Given a specific alternative $a^*$, determine a minimum set of voters 
$Y$ such that the removal of $Y$ from the set $N$ of all voters makes $a^*$ 
to a winner\footnote{%
 The destructive veriant of our control problem asks for a minimum set of voters 
 the removal of which prevents win of $a^*$.
 }.
To allow for an easier relation-algebraic representation, we consider the dual
maximization-problem, i.e., we ask for a maximum
set of voters $X$ such that $a^*$ wins subject to the condition that only 
voters from $X$ are allowed to vote.
It is obvious that from $X$ then a desired $Y$ is obtained via $Y = N \setminus X$.

We start with the assumption that `to win' means `to be a Condorcet winner'.
As shown in \cite{BarTovTri}, Condorcet voting is computationally resistant to our 
control type in
case of this specification of winners.
I.e., it is NP-hard to decide, for $a^* \in A$ and $k \in \NAT$ as inputs, whether 
it is possible to find $k$ voters whose removal makes $a^*$ to a Condorcet winner.

As a first step towards a solution of the maximization-problem, we relativize the 
dominance relation $C$ by additionally considering the sets of 
voters $X$ which only are allowed to vote.
Concretely this means that we specify a relation $R$ that relates $X \in 2^N$ 
with $a, b \in A$ iff $|\{i \in X \mid a >_i b\}| > |\{i \in X \mid b >_i a\}|$.
Since we work with binary relations, we have to combine two of the three objects 
$X$, $a$ and $b$ to a pair.
We do this with $a$ and $b$, i.e., relate $X$ with $u$ under the assumption 
that $u_1$ equals $a$ and $u_2$ equals $b$.
Then the following theorem shows how
the \emph{relativized dominance relation} $R : \REL{2^N}{A^2}$ can 
be specified relation-algebraically.
Again we assume the relations $\pi, \rho : \REL{A^2}{A}$, $\M : \REL{N}{2^N}$ and 
$\Size : \REL{2^N}{2^N}$ to be at hand.


\begin{Theorem}\label{CV2}
Suppose again that $P : \REL{N}{A^2}$ models an instance of Condorcet voting.
If we specify relations $E, F : \REL{\PNAA}{2^N}$ and $R : \REL{2^N}{A^2}$ 
by
$$
E = \Syq( \Rtup{\M}{P} , \M)
\quad
F = \Syq(\Rtup{\M}{P \Comp \Rtup{\rho}{\pi}},\M)
\quad
R = \Rel((E \cap F \Comp (\Size \cap \Neg{\Transp{\Size}})) \Comp \L),
$$
then $R_{X,u}$ is equivalent to
$|\{i \in X \mid P_{i,u}\}| > |\{i \in X \mid P_{i,\hat{u}}\}|$,
for all $X \in 2^N$ and $u \in A^2$.
\end{Theorem}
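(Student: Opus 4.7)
The plan is to mimic the proof of Theorem \ref{CV1} almost verbatim, with the sole conceptual difference that the pairing $\Rtup{\M}{P}$ replaces $P$ in the arguments of the symmetric quotients. The effect of this pairing is to carry along a parameter $X \in 2^N$ that restricts attention to voters in $X$, so that the subsequent size-comparison logic operates on the sets $\{i \in X \mid P_{i,u}\}$ and $\{i \in X \mid P_{i,\hat{u}}\}$ rather than on $\{i \in N \mid P_{i,u}\}$ and $\{i \in N \mid P_{i,\hat{u}}\}$.

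First I would unfold the pairing point-wise: using the definition of $\Rtup{\cdot}{\cdot}$ recalled in Section \ref{Sec2}, for all $i \in N$ and $(X,u) \in \PNAA$ we have $\Rtup{\M}{P}_{i,(X,u)}$ iff $\M_{i,X} \wedge P_{i,u}$, that is, iff $i \in X \wedge P_{i,u}$. Plugging this into the point-wise description of the symmetric quotient, and arguing exactly as in the preparatory step of the proof of Theorem \ref{CV1}, I obtain the auxiliary equivalence
$$
E_{(X,u),Y} \iff \{ i \in X \mid P_{i,u} \} = Y
$$
for all $Y \in 2^N$. The same reasoning, combined with the fact (established in the proof of Theorem \ref{CV1}) that composition with $\Rtup{\rho}{\pi}$ on the right of $P$ swaps the two components of $u$ to $\hat{u}$, yields
$$
F_{(X,u),Z} \iff \{ i \in X \mid P_{i,\hat{u}} \} = Z
$$
for all $Z \in 2^N$.

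Next I would compute $R_{X,u}$ by reading the definition $R = \Rel((E \cap F \Comp (\Size \cap \Neg{\Transp{\Size}})) \Comp \L)$ backwards, step by step, using the point-wise characterization of $\Rel$ from Section \ref{Sec2} together with the fact that the composition with $\L$ produces an existential quantifier over $Y \in 2^N$ (and a nested one over $Z \in 2^N$ stemming from the composition $F \Comp (\Size \cap \Neg{\Transp{\Size}})$). Substituting the two auxiliary equivalences above and unfolding $\Size$ and $\Neg{\Transp{\Size}}$ to $|Z| \le |Y|$ and $|Y| > |Z|$, the chain of equivalences collapses exactly as in Theorem \ref{CV1} to
$$
R_{X,u} \iff |\{ i \in X \mid P_{i,u} \}| > |\{ i \in X \mid P_{i,\hat{u}} \}|,
$$
which is the desired claim.

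The only subtlety I anticipate is bookkeeping of types: the source of $E$ and $F$ has grown from $A^2$ to $\PNAA$ because the pairing with $\M$ adds the $2^N$-component, so care is needed to check that the intermediate expressions $E \cap F \Comp (\Size \cap \Neg{\Transp{\Size}})$ and its composition with $\L$ really have the correct types $\REL{\PNAA}{2^N}$ and $\VEC{\PNAA}$, respectively, and that $\Rel$ then produces a relation of type $\REL{2^N}{A^2}$. There is no new mathematical obstacle beyond Theorem \ref{CV1}; the argument is a parameterized version of the previous one.
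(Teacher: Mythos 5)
Your proposal is correct and follows essentially the same route as the paper's own proof: unfold the pairing $\Rtup{\M}{P}$ point-wise to get the two auxiliary equivalences $E_{(X,u),Y} \iff \{i \in X \mid P_{i,u}\} = Y$ and $F_{(X,u),Z} \iff \{i \in X \mid P_{i,\hat{u}}\} = Z$, then repeat the chain of equivalences from Theorem \ref{CV1} with the extra parameter $X$ carried along. Nothing is missing; the type bookkeeping you flag is exactly the only point of care, and it works out as you describe.
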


\begin{proof}
Assume arbitrary objects $X \in 2^N$ and $u \in A^2$ to be given.
Then, we have for all $Y \in 2^N$ the following equivalence:
$$
\begin{array}{rl}
       E_{(X,u),Y} 
\iff & \Syq(\Rtup{\M}{P},\M)_{(X,u),Y} \\
\iff & \Forall i \in N : \Rtup{\M}{P}_{i,(X,u)} 
                         \leftrightarrow
                         \M_{i,Y} \\[0.99mm]
\iff & \Forall i \in N : \M_{i,X} \wedge P_{i,u}
                         \leftrightarrow
                         \M_{i,Y} \\[0.99mm]
\iff & \Forall i \in N : i \in X \wedge P_{i,u}
                         \leftrightarrow
                         i \in Y \\[0.99mm]
\iff & \{i \in X \mid P_{i,u}\} = Y
\end{array}
$$
In a similae way we can show for all $Z \in 2^N$ the following fact,
using the property of the exchange relation $\Rtup{\rho}{\pi} : \REL{A^2}{A^2}$ 
mentioned in the proof of Theorem \ref{CV1}:
$$
F_{(X,u),Z} \iff \{i \in X \mid P_{i,\hat{u}}\} = Z.
$$
Now, the following calculation shows the claim:
$$
\begin{array}[b]{rl@{}l}
~~     R_{X,u}
\iff & \Rel((E \cap F \Comp (\Size \cap \Neg{\Transp{\Size}}) ) \Comp \L
           )_{X,u} & \\[0.99mm]
\iff & ((E \cap F \Comp (\Size \cap \Neg{\Transp{\Size}}) ) \Comp \L
       )_{(X,u)} \\[0.99mm]
\iff & \Exists Y \in 2^N : 
               E_{(X,u),Y} \wedge 
               (F \Comp (\Size \cap \Neg{\Transp{\Size}}) )_{(X,u),Y} 
               \wedge \L_Y & \\[0.99mm]
\iff & \Exists Y \in 2^N : 
               E_{(X,u),Y} 
               \wedge 
               \Exists Z \in 2^N : 
                  F_{(X,u),Z} \wedge \Size_{Z,Y} \wedge \neg \Size_{Y,Z} & \\[0.99mm]
\iff & \Exists Y, Z \in 2^N : 
              \{i \in X \mid P_{i,u}\} = Y 
              \wedge
              \{i \in X \mid P_{i,\hat{u}}\} = Z 
              \wedge 
              |Z| < |Y| & \\[0.99mm]
\iff & |\{i \in X \mid P_{i, u}\}| > |\{i \in X \mid P_{i,\hat{u}}\}| & ~\Box
\end{array}
$$
\end{proof}

\noindent
In the second step, we now take the relativized dominance relation $R$ of 
Theorem \ref{CV2} and specify with its help a vector $\cand : \VEC{2^N}$ 
that describes the subset of $2^N$ the members of which are the sets $X$ which 
are \emph{candidates} for the solution of our control problem.
The latter property means that $a^*$ is a Condorcet winner, provided that 
only voters from $X$ are allowed to vote.
{From} the vector $\cand$ we then finally compute the vector description 
$\sol : \VEC{2^N}$ of the maximum candidate sets, which are the solutions we 
are looking for.
The next theorem shows how to get $\cand$ and $\sol$ from $R$ and $a^*$.

\begin{Theorem}\label{CV3}
Suppose that $R : \REL{2^N}{A^2}$ is the relation specified in Theorem \ref{CV2} 
and that the specific alternative $a^* \in A$ is described by the point 
$p : \VEC{A}$.
If we specify vectors $\cand, \sol : \VEC{2^N}$ by
$$
\cand = \Neg{\Neg{R} \Comp (\pi \Comp p \cap \Neg{\rho \Comp p}) }
\qquad
\sol = \cand \cap \Neg{ \Neg{\Transp{\Size}} \Comp \cand} ,
$$
then the set 
$\{X \in 2^N \mid \Forall b \in A \setminus \{a^*\} : |\{i \in X \mid P_{i, (a^*,b)}\}| 
> |\{i \in X \mid P_{i, (b,a^*)}\}|\}$
is described by $\cand$ and the set of its maximum sets by $\sol$.
\end{Theorem}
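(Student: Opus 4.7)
The plan is to verify the two descriptions point-wise, reducing each to a quantified statement over subsets and voter counts, and then invoking Theorem~\ref{CV2} as a black box for occurrences of $R$.

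First, for $\cand$, I would pin down what the auxiliary vector $\pi \Comp p \cap \Neg{\rho \Comp p} : \VEC{A^2}$ describes. Since $p$ is the point of $a^*$, point-wise $(\pi \Comp p)_u$ holds iff $u_1 = a^*$ and $(\rho \Comp p)_u$ holds iff $u_2 = a^*$, so this vector describes exactly the pairs $u$ of the form $(a^*,b)$ with $b \in A \setminus \{a^*\}$. Unfolding the outer pattern $\Neg{\Neg{R} \Comp (\cdots)}$ at an index $X \in 2^N$ then yields that $\cand_X$ is equivalent to the statement that $R_{X,(a^*,b)}$ holds for every $b \in A \setminus \{a^*\}$. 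Applying Theorem~\ref{CV2} replaces each $R_{X,(a^*,b)}$ by the required inequality $|\{i \in X \mid P_{i,(a^*,b)}\}| > |\{i \in X \mid P_{i,(b,a^*)}\}|$, which gives the claimed description of $\cand$.

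For $\sol$, I would again unfold at an index $X$. The factor $\Neg{\Neg{\Transp{\Size}} \Comp \cand}$ at $X$ expresses that there is no $Y \in 2^N$ for which both $\Neg{\Transp{\Size}}_{X,Y}$ and $\cand_Y$ hold. Now $\Transp{\Size}_{X,Y}$ is equivalent to $\Size_{Y,X}$, i.e., $|Y| \leq |X|$, so $\Neg{\Transp{\Size}}_{X,Y}$ unfolds to $|X| < |Y|$. Consequently $\sol_X$ holds iff $X$ lies in the set described by $\cand$ and no set in that collection has strictly greater cardinality than $X$, which is by definition what it means for $X$ to be a maximum element of that collection.

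The main obstacle is purely bookkeeping around the transposition of $\Size$ inside the complement: it is easy to flip the direction of the inequality when moving from $\Neg{\Transp{\Size}}_{X,Y}$ to a statement about $|X|$ and $|Y|$, so one must be careful to read $\Transp{\Size}_{X,Y}$ as $\Size_{Y,X}$ before negating. Apart from that, the argument is a routine unfolding of the relation-algebraic operations into first-order logic, in exactly the same style as the calculations given for Theorems~\ref{CV1} and~\ref{CV2}.
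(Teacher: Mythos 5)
Your proposal is correct and follows essentially the same route as the paper's own proof: point-wise unfolding of $\cand$ using that $p$ describes $a^*$, substitution of the inequality via Theorem~\ref{CV2}, and reading $\Neg{\Transp{\Size}}_{X,Y}$ as $|Y|>|X|$ to identify $\sol$ with the maximum-cardinality candidate sets. No gaps to report.
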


\begin{proof}
Since $p$ describes $a^*$, for all $u \in A^2$ we have 
$(\pi \Comp p)_u$ iff $u_1 = a^*$ and $\Neg{\rho \Comp p}_u$ iff $u_2 \not= a^*$.
We now assume an arbitrary set $X \in 2^N$ and calculate as follows, where in 
the fifth step Theorem \ref{CV2} is applied:
$$
\begin{array}{rl}
       \cand_X
\iff & \Neg{\Neg{R} \Comp ( \pi \Comp p \cap \Neg{\rho \Comp p} )}_X \\[0.99mm]
\iff & \neg \Exists u \in A^2 : 
            \Neg{R}_{X,u} \wedge 
            (\pi \Comp p)_u \wedge \Neg{\rho \Comp p}_u \\[0.99mm]
\iff & \neg \Exists u \in A^2 : 
            \Neg{R}_{X,u} \wedge u_1 = a^* \wedge u_2 \not= a^* \\[0.99mm]
\iff & \Forall u \in A^2 : 
            u_1 = a^* \wedge u_2 \not= a^* \rightarrow R_{X,u} \\[0.99mm]
\iff & \Forall u \in A^2 : 
            u_1 = a^* \wedge u_2 \not= a^* 
            \rightarrow 
            |\{i \in X \mid P_{i,u}\}| > |\{i \in X \mid P_{i,\hat{u}}\}| \\[0.99mm]
\iff & \Forall b \in A : 
            b \not= a^* 
            \rightarrow 
            |\{i \in X \mid P_{i,(a^*,b)}\}| > |\{i \in X \mid P_{i,(b,a^*)}\}| \\[0.99mm]
\iff & \Forall b \in A \setminus \{a^*\} :
            |\{i \in X \mid P_{i,(a^*,b)}\}| > |\{i \in X \mid P_{i,(b,a^*)}\}| \\[0.99mm]
\end{array}
$$
Hence, the first claim follows from the definition of the set a vector describes.
To prove the second claim, we take again an arbitrary set $X \in 2^N$.
Then, we get:
$$
\begin{array}{rl}
       \sol_X
\iff & (\cand \cap \Neg{ \Neg{\Transp{\Size}} \Comp \cand })_X \\
\iff & \cand_X \wedge 
       \Neg{ \Neg{\Transp{\Size}} \Comp \cand }_X \\
\iff & \cand_X \wedge 
       \neg \Exists Y \in 2^N : \Neg{\Size}_{Y,X} \wedge \cand_Y \\
\iff & \cand_X \wedge 
       \Forall Y \in 2^N : \cand_Y \rightarrow \Size_{Y,X} \\
\iff & \cand_X \wedge 
       \Forall Y \in 2^N : \cand_Y \rightarrow |Y| \leq |X|
\end{array}
$$
This equivalence implies that $\sol$ describes the set of maximum sets of voters 
$X$ for which $\cand_X$ holds, that is, for which $a^*$ wins subject to the
condition that only voters from $X$ are allowed to vote.
\hfill $\Box$
\end{proof}
               
\noindent
Using \RelView\ we have solved our control problem with Condorcet winners as
winning alternatives for the above input relation $P$ and each of the eight alternatives.  
The tool showed that only the alternatives $a$, $b$ and $h$ can made to 
Condorcet winners by deleting voters.
Some of the results for these alternatives are presented in the following six 
\RelView\ pictures:
\label{example:relview results}

\centerline{
\includegraphics[scale=0.239]{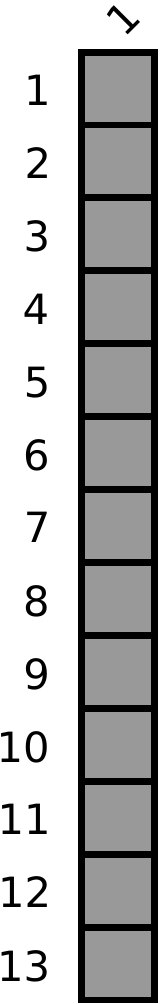} \qquad                     
\includegraphics[scale=0.239]{CCon}    \qquad\quad
\includegraphics[scale=0.239]{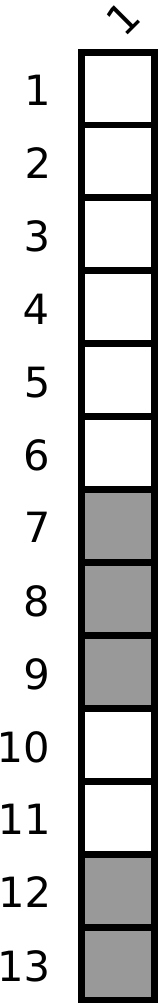} \qquad
\includegraphics[scale=0.239]{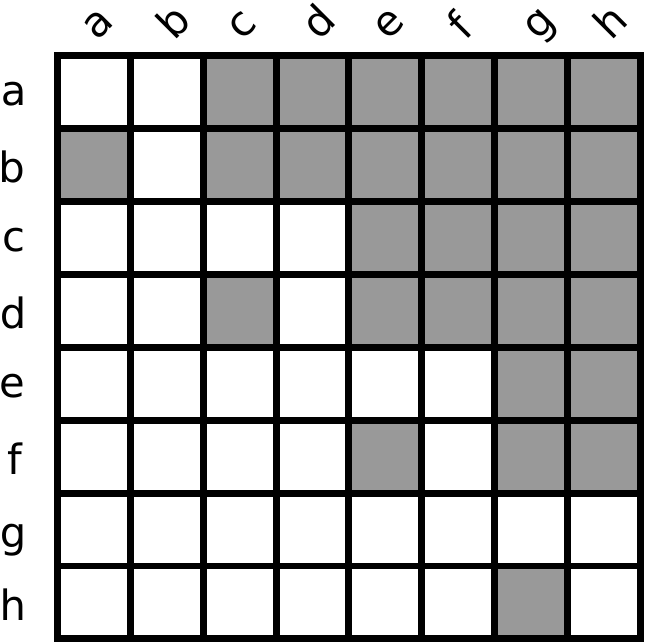}    \qquad\quad
\includegraphics[scale=0.239]{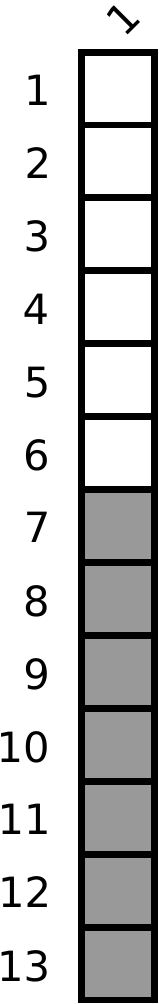} \qquad
\includegraphics[scale=0.239]{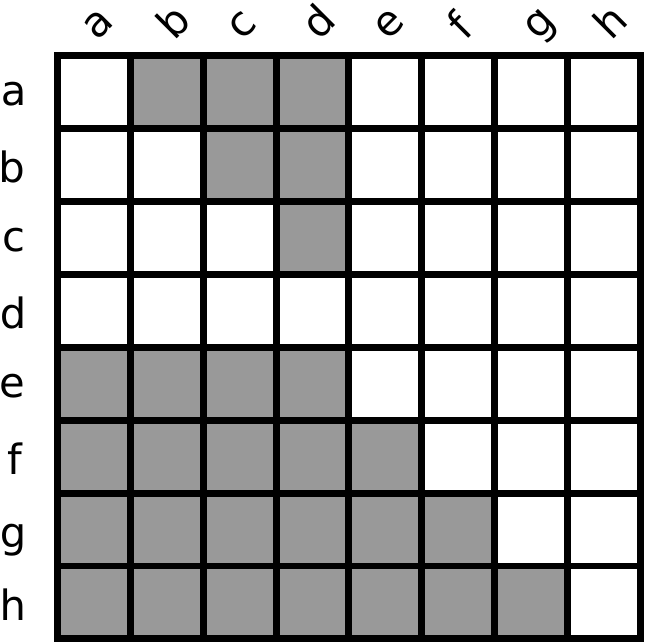} 
}

\medskip

\noindent
The vector on position 1 says that $a$ is a Condorcet winner if all voters 
are allowed to vote and the corresponding dominance relation on position 2 
is the original dominance relation $C$.
To make $b$ to a Condorcet winner at least eight voters must be deleted.
Altogether there are 45 possibilities for this.
The vector on position 3 shows one of them, where the voters from 1 to 6 and 
the voters 10 and 11 are deleted.
On position 4 the resulting dominance relation is depicted.
To get $h$ as Condorcet winner requires a removal of at least six voters.
According to \RelView\ there are 85 possibilities for this.
One of them and the resulting dominance relation are depicted at positions 5 
and 6. 

Since Condorcet winners do not always exist, choice sets have been introduced as 
a general concept that always allows to define the winners of Condorcet 
voting.
In the remainder of this section we treat a well-known example, the 
\emph{uncovered set}.
This choice set is usually defined via an induced transitive subrelation of 
the dominance relation $C$, called \emph{covering relation}.
In the literature different such relations are discussed.
We concentrate on a relation $G : \REL{A}{A}$ that in \cite{Duggan} is called
\emph{Gilles covering} and in \cite{Bra} \emph{upward covering}.
Its usual point-wise definition says that $G_{a,b}$ iff $C_{a,b}$ and for all 
$c \in A$ from $C_{c,a}$ it follows $C_{c,b}$, for all $a,b \in A$.
This relation-algebraically can be specified as equation
$G =  C \cap \Neg{\Transp{C} \Comp \Neg{C}}$.
The (Gilles or upward) \emph{uncovered set} is the set of all $a \in A$ such that
there exists no $b \in A \setminus \{a\}$ with $G_{b,a}$.
It is non-empty because $G$ is a strict-order and $A$ is finite.
To the best of our knowledge, the computational
complexity of control problems for Condorcet elections with winning conditions
different from being a Condorcet winner has not been studied in the literature.
We obtain the first result in this direction by proving that the problem to control 
Condorcet elections with upward covering
by deleting voters is NP-hard (see Section~\ref{Sec5}).
To solve our control problem for this specification of winners we use 
the same idea as in the relativization of the relation $C$ to the relation 
$R$ by additionally considering the set of voters $X$ which are allowed 
to vote.
The next theorem shows how to obtain the \emph{relativized covering relation} $U$
from the relativized dominance relation $R$.


\begin{Theorem}\label{CV4}
Suppose again that $R : \REL{2^N}{A^2}$ is the relation specified in Theorem 
\ref{CV2}.
If we specify relations $E : \REL{\AaAa}{A^2}$ and 
$U : \REL{\PNAA}{2^N}$ by
$$
E = \Transp{\Rtup{\pi \Comp \Transp{\rho}}{\rho \Comp \Transp{\rho}}}
     \cap
     \Vec(\pi \Comp \Transp{\pi}) \Comp \L
\qquad
U = R \cap \Neg{\Rtup{R}{\Neg{R}} \Comp E},
$$
then for all $X \in 2^N$ and $u \in A^2$ we have
$$
U_{X,u} \iff R_{X,u} 
             \wedge 
             \Forall c \in A : R_{X,(c,u_1)} \rightarrow R_{X,(c,u_2)} .
$$
\end{Theorem}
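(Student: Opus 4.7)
The plan is to verify the claimed point-wise characterization of $U$ by first unfolding the two building blocks of $E$ separately, and then combining them with the complement-composition expression $\Rtup{R}{\Neg{R}} \Comp E$. The central lemma I would establish first is that $E$ encodes precisely the constraint ``$w = ((c,u_1),(c,u_2))$ for some $c \in A$'', that is: for all $w \in \AaAa$ and $u \in A^2$,
$$
E_{w,u} \iff (w_1)_1 = (w_2)_1 \;\wedge\; (w_1)_2 = u_1 \;\wedge\; (w_2)_2 = u_2 .
$$

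For the first conjunct in $E$, I would compute point-wise that $(\pi \Comp \Transp{\rho})_{v,v'} \iff v_1 = v'_2$ and $(\rho \Comp \Transp{\rho})_{v,v'} \iff v_2 = v'_2$, and then apply the point-wise rule for pairing and transposition to get
$$
\Transp{\Rtup{\pi \Comp \Transp{\rho}}{\rho \Comp \Transp{\rho}}}_{w,u} \iff (w_1)_2 = u_1 \;\wedge\; (w_2)_2 = u_2 .
$$
For the second conjunct, I would use $(\pi \Comp \Transp{\pi})_{v,v'} \iff v_1 = v'_1$, so by the point-wise behaviour of $\Vec$ one obtains $\Vec(\pi \Comp \Transp{\pi})_w \iff (w_1)_1 = (w_2)_1$, and multiplication with $\L$ merely extends this statement over all $u \in A^2$. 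Intersecting the two yields the characterization of $E$ stated above.

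From there the rest is a direct calculation in the spirit of Theorems \ref{CV1} and \ref{CV2}. Using $\Rtup{R}{\Neg{R}}_{X,w} \iff R_{X,w_1} \wedge \neg R_{X,w_2}$ together with the description of $E$, the composition $(\Rtup{R}{\Neg{R}} \Comp E)_{X,u}$ becomes
$$
\Exists w \in \AaAa : R_{X,w_1} \wedge \neg R_{X,w_2} \wedge (w_1)_1 = (w_2)_1 \wedge (w_1)_2 = u_1 \wedge (w_2)_2 = u_2 ,
$$
which simplifies to $\Exists c \in A : R_{X,(c,u_1)} \wedge \neg R_{X,(c,u_2)}$. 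Negating this and intersecting with $R_{X,u}$ produces exactly the right-hand side of the theorem.

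The main obstacle is obtaining the point-wise reading of $E$: the expression combines pairing, transposition, and a $\Vec(\cdot)\Comp\L$ trick whose only purpose is to cut down to those $w \in \AaAa$ whose two components share their first coordinate. Once that lemma is in hand, the remaining steps are a routine unfolding using the standard point-wise laws for composition, complement, pairing, and the previously established semantics of $R$ from Theorem \ref{CV2}.
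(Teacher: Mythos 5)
Your proposal is correct and follows essentially the same route as the paper's proof: you first establish the point-wise reading of $E$ (that it forces $w=((c,u_1),(c,u_2))$ for some $c\in A$) via the point-wise descriptions of $\pi\Comp\Transp{\rho}$, $\rho\Comp\Transp{\rho}$, $\pi\Comp\Transp{\pi}$, the pairing and $\Vec(\cdot)\Comp\L$, and then unfold $\Rtup{R}{\Neg{R}}\Comp E$ and negate, exactly as in the paper. No gaps.
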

\begin{proof}
Let $X \in 2^N$ and $u \in A^2$ be given.
In a first step we show for all $v, w \in A^2$ the following property, 
where we use the equivalence of $(\pi \Comp \Transp{\rho})_{u,v}$ and 
$u_1 = v_2$, of $(\rho \Comp \Transp{\rho})_{u,w}$ and $u_2 = w_2$, and 
of $(\pi \Comp \Transp{\pi})_{v,w}$ and $v_1 = w_1$:
$$
\begin{array}[b]{rl}
       E_{(v,w).u}
\iff & (\Transp{\Rtup{\pi \Comp \Transp{\rho}}{\rho \Comp \Transp{\rho}}}
        \cap
        \Vec(\pi \Comp \Transp{\pi}) \Comp \L
       )_{(v,w),u} \\[0.99mm]
\iff & \Rtup{\pi \Comp \Transp{\rho}}{\rho \Comp \Transp{\rho}}_{u,(v,w)}
       \wedge
       (\Vec(\pi \Comp \Transp{\pi}) \Comp \L)_{(v,w),u} \\[0.99mm]
\iff & (\pi \Comp \Transp{\rho})_{u,v}
       \wedge
       (\rho \Comp \Transp{\rho})_{u,w}
       \wedge
       \Vec(\pi \Comp \Transp{\pi})_{(v,w)} \\[0.99mm]
\iff & u_1 = v_2
       \wedge
       u_2 = w_2
       \wedge
       (\pi \Comp \Transp{\pi})_{v,w} \\[0.99mm]
\iff & u_1 = v_2
       \wedge
       u_2 = w_2
       \wedge
       v_1 = w_1
\end{array}
$$
We now can calculate as follows to conclude the proof:
$$
\begin{array}[b]{rl@{}l}
~      U_{X,u}
\iff & (R \cap \Neg{\Rtup{R}{\Neg{R}} \Comp E})_{X,u} \\[0.99mm] 
\iff & R_{X,u}
       \wedge
       \Neg{\Rtup{R}{\Neg{R}} \Comp E}_{X,u} & \\[0.99mm]
\iff & R_{X,u} 
       \wedge  
       \neg \Exists v, w \in A^2 : 
            \Rtup{R}{\Neg{R}}_{X,(v,w)} \wedge E_{(v,w),u} & \\[0.99mm]
\iff & R_{X,u} 
       \wedge 
       \neg \Exists v, w \in A^2 :
               R_{X,v} 
               \wedge 
               \Neg{R}_{X,w} 
               \wedge
               u_1 = v_2
               \wedge
               u_2 = w_2
               \wedge
               v_1 = w_1 & \\[0.99mm]
\iff & R_{X,u} 
       \wedge
       \neg \Exists c \in A :
               R_{X,(c,u_1)} 
               \wedge 
               \Neg{R}_{X,(c,u_2)} & \\[0.99mm]
\iff & R_{X,u} 
       \wedge
       \Forall c \in A :
               R_{X,(c,u_1)} 
               \rightarrow
               R_{X,(c,u_2)} & \Box
\end{array}
$$
\end{proof}

\noindent
After this result we are able to solve our control problem also
for the uncovered set as set of winners.
We use again a vector $\cand$ for the description of the candidate sets and a vector 
$\sol$ for the description of the sulutions.

\begin{Theorem}\label{CV5}
Suppose that $U : \REL{2^N}{A^2}$ is the relation specified in Theorem \ref{CV4} 
and that the specific alternative $a^* \in A$ is described by the point 
$p : \VEC{A}$.
If we specify vectors $\cand, \sol : \VEC{2^N}$ by
$$
\cand = \Neg{U \Comp (\Neg{\pi \Comp p} \cap \rho \Comp p)}
\qquad
\sol = \cand \cap \Neg{ \Neg{\Transp{\Size}} \Comp \cand} ,
$$
then the set 
$\{X \in 2^N \mid \neg \Exists b \in A \setminus \{a^*\} : U_{X(b,a^*)}\}$
is described by $\cand$ and the set of its maximum sets by $\sol$.
\end{Theorem}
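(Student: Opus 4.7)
The plan is to mimic the structure of the proof of Theorem \ref{CV3}, exploiting the same point-wise unfoldings of the projection-and-point compositions, and then re-using the maximality argument verbatim.

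First, I would record the two basic facts about $p$: since $p$ describes $a^*$, for every $u \in A^2$ we have $(\pi \Comp p)_u \iff u_1 = a^*$ and $(\rho \Comp p)_u \iff u_2 = a^*$. Consequently $\Neg{\pi \Comp p}_u \iff u_1 \neq a^*$, and the pair $(\Neg{\pi \Comp p} \cap \rho \Comp p)_u$ unfolds to the condition $u_1 \neq a^* \wedge u_2 = a^*$. This is the only genuinely new computation in the theorem; everything else is bookkeeping.

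Next, for an arbitrary $X \in 2^N$, I would calculate
$$
\begin{array}{rl}
       \cand_X
\iff & \Neg{U \Comp (\Neg{\pi \Comp p} \cap \rho \Comp p)}_X \\[0.99mm]
\iff & \neg\Exists u \in A^2 : U_{X,u} \wedge (\Neg{\pi \Comp p} \cap \rho \Comp p)_u \\[0.99mm]
\iff & \neg\Exists u \in A^2 : U_{X,u} \wedge u_1 \neq a^* \wedge u_2 = a^* \\[0.99mm]
\iff & \neg\Exists b \in A \setminus \{a^*\} : U_{X,(b,a^*)},
\end{array}
$$
where the last step parameterises the existential over $u = (b,a^*)$. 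By the definition of what a vector describes, this establishes the first claim.

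For the second claim I would simply repeat the argument given at the end of the proof of Theorem \ref{CV3}: unfolding $\sol = \cand \cap \Neg{\Neg{\Transp{\Size}} \Comp \cand}$ point-wise yields $\sol_X \iff \cand_X \wedge \Forall Y \in 2^N : \cand_Y \rightarrow |Y| \leq |X|$, which says exactly that $X$ is a set of maximum cardinality among the sets described by $\cand$. I expect no real obstacle here; the only point requiring a little care is to make sure the existential quantifier over $u \in A^2$ is correctly re-indexed as a quantifier over $b \in A \setminus \{a^*\}$, which works because the condition $u_1 \neq a^* \wedge u_2 = a^*$ deterministically fixes $u_2 = a^*$ and leaves $u_1$ ranging over $A \setminus \{a^*\}$.
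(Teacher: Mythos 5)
Your proposal is correct and follows essentially the same route as the paper's own proof: the same point-wise unfolding of $(\Neg{\pi \Comp p} \cap \rho \Comp p)_u$ to $u_1 \neq a^* \wedge u_2 = a^*$, the same re-indexing of the existential quantifier over $u$ as one over $b \in A \setminus \{a^*\}$, and the same reuse of the maximality argument from Theorem \ref{CV3} for $\sol$. No gaps.
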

\begin{proof}
Because $a^*$ is described by $p$, for all $u \in A^2$ we have 
$\Neg{\pi \Comp p}_u$ iff $u_1 \not= a^*$ and $(\rho \Comp p)_u$ iff $u_2 = a^*$.
Now, for all $X \in 2^N$ we can calculate as follows
to show the first claim (for the second claim cf. the proof of Theorem \ref{CV3}).
$$
\begin{array}[b]{rl}
       \cand_X
\iff & \Neg{U \Comp (\Neg{\pi \Comp p} \cap \rho \Comp p)}_X \\[0.99mm]
\iff & \neg \Exists u \in A^2 : 
            U_{X,u} \wedge \Neg{\pi \Comp p}_u \wedge (\rho \Comp p)_u \\[0.99mm]
\iff & \neg \Exists u \in A^2 : 
            U_{X,u} \wedge u_1 \not= a^* \wedge u_2 = a^* \\[0.99mm]
\iff & \neg \Exists b \in A : 
            U_{X,(b,a^*)} \wedge b \not= a^* \\[0.99mm]
\iff & \neg \Exists b \in A \setminus \{a^*\} : U_{X,(b,a^*)} 
\end{array}
\eqno{\Box}
$$
\end{proof}

\noindent
As already mentioned, the uncovered set is always non-empty.
The degenerate case is that no voter is allowed to vote.
Then the resulting dominance relation as well as the induced covering 
relation are empty and, thus, the uncovered set equals $N$.
\RelView\ showed that in our running example this situation occurs 
if $c$ or $d$ shall win.
We already know that $a$ wins without a removal of voters.
By reason of the tool at least five voters must be deleted to ensure 
win for $e, f, g$ or $h$ and the corresponding numbers of possibilities are
11, 111, 15 and 126.
And, finally, $b$ becomes winning if at least seven voters are not allowed
to vote.
To reach the goal there exist 120 possibilities.
We end this section with the following three \RelView\ pictures that concern
alternative $e$:

\medskip

\centerline{
\includegraphics[scale=0.239]{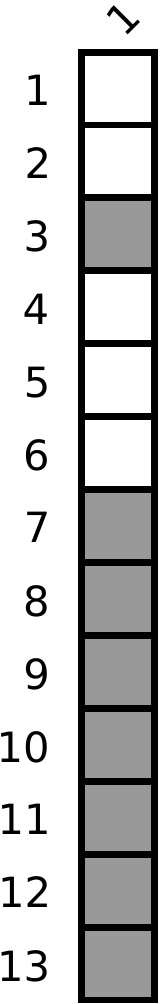}   \qquad           
\includegraphics[scale=0.239]{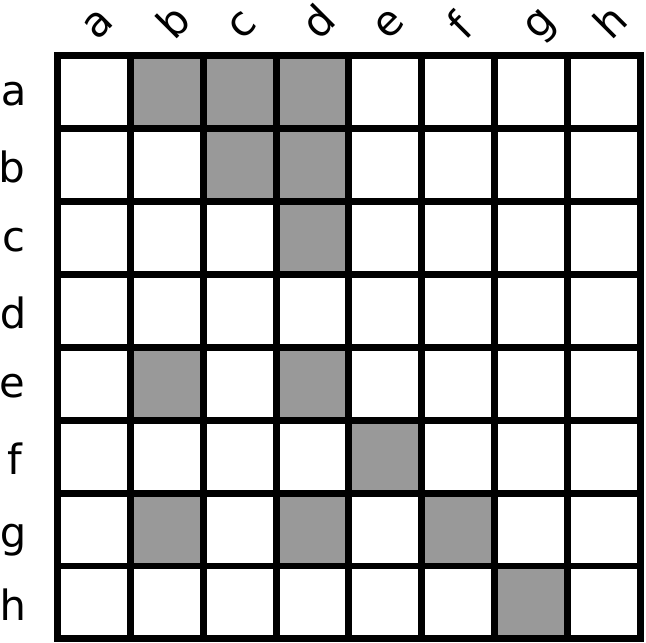}    \qquad
\includegraphics[scale=0.239]{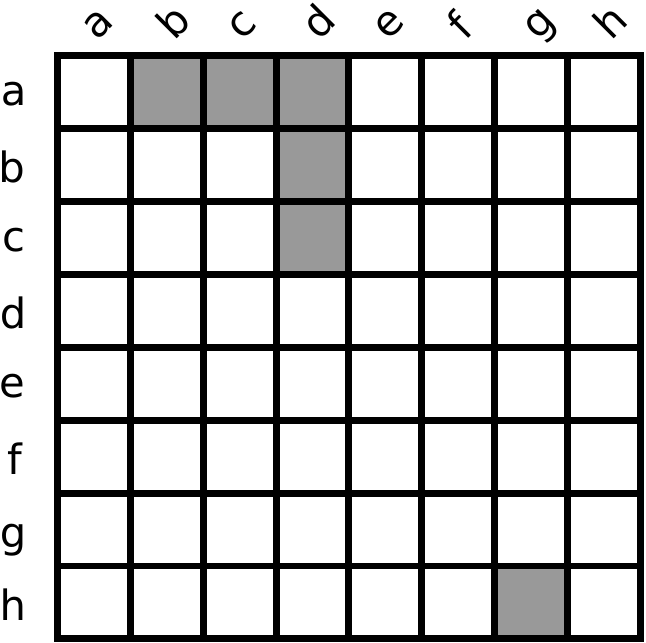}
}

\medskip

\noindent
The vector 
shows that $e$ is in the uncovered set if the voters 1, 2, 4, 5 and 6 are 
deleted, the relation in the middle is the dominance relation 
resulting from this, and the relation on the right is the induced covering 
relation.
The empty columns show that, besides $e$, the removal 
also make $a, f$ and $h$ uncovered.

\section{Control Remains Hard if Uncovered Alternatives Win}\label{Sec5} 

\newcommand{\pointsItem}[1]{%
\ifcompleteproof
\item[#1]
\else
\item{#1}
\fi}

As already mentioned, in \cite{BarTovTri} it is shown that for Condorcet voting 
constructive control by deleting voters is NP-hard if Condorcet winners are 
defined as winners.
In this section we prove that this result remains true if instead of Condorcet 
winners the uncovered alternatives are taken.
To this end we first introduce the following problem that we will be used in 
our reduction.

\newcommand{\mathtext}[1]{\ensuremath{\mathrm{\text{#1}}}}
\newcommand{\prblemname}[1]{\ensuremath{\mathsf{#1}}}
\newcommand{\card}[1]{\left| #1 \right|}
\newcommand{\set}[1]{\ensuremath\left\{#1\right\}}
\newcommand{\oit}{\mbox{1-in-3}}
\newcommand{\oitsat}
  {\ensuremath{\prblemname{1}\textrm{-}\prblemname{in}\textrm{-}\prblemname{3}\textrm{-}\prblemname{Sat}}}
\newcommand{\oitsatversion}{\ensuremath{\oitsat'}}

\begin{Definition}
The problem \textsf{X4C} (\emph{exact cover by $4$-sets}) is the following:
\begin{center}
\begin{tabular}{lp{10.5cm}}
Input: & Sets $S_1,\dots,S_k \in 2^{\{1,\ldots,n\}}$ such
         that for all $i \in \{1,\ldots,k\}$ it holds
         $\card{S_i} = 4$ and 
         $\card{\set{i\ \vert\ j \in S_i}} = 3$ for all $j \in \{1,\ldots,n\}$. \\
Question: & Is there some set $I \in 2^{\set{1,\dots,k}}$ such that 
         $\bigcup_{i\in I} S_i = \{1,\ldots,n\}$ and $S_i \cap S_j=\emptyset$ for 
         all $i,j\in I$ with $i \neq j$?
\end{tabular}
\end{center}
\end{Definition}

\noindent
Note that if an $I$ as required exists, then $\card I = \frac14n$, 
since each $S_i$ has cardinality $4$ and the union must have cardinality 
$n$. 
On the other hand, if an $I$ with $\bigcup_{i\in I}S_i = \{1,\ldots,n\}$ exists and
$\card I = \frac14n$, then by a simple counting argument, 
$S_i \cap S_j = \emptyset$ for all $i,j \in I$ with $i\neq j$. 
Also, the value $k$ in the problem instance must necessarily be equal 
to $\frac34n$, since each $S_i$ has $4$ elements and each $j\in \{1,\ldots,n\}$ 
appears in exactly $3$ of the sets $S_i$. 
In particular, it follows that $n$ is a multiple of $4$ in every 
instance fo \textsf{X4C}.
The following result is mentioned without proof in \cite{FaHeSc}, we give the complete proof:

\begin{Lemma}\label{lemma:x4c}
The problem \textsf{X4C} is NP-hard.
\end{Lemma}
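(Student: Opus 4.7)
The plan is to give a polynomial-time many-one reduction from a known $\NP$-hard variant of exact cover, namely the restricted problem \textsf{RX3C}: exact cover by $3$-sets in which each element of the universe appears in exactly three of the $3$-sets. This restricted variant is $\NP$-hard by a classical result of Gonzalez.

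Given an \textsf{RX3C}-instance with universe $\{1,\ldots,3q\}$ and triples $T_1,\ldots,T_{3q}$, I would build the \textsf{X4C}-instance as follows. For each $T_i$ I introduce one fresh ``tag'' element $d_i$ and replace $T_i$ by the $4$-set $\hat{T}_i = T_i \cup \{d_i\}$. Every original element already appears in exactly three of the $\hat{T}_i$, while each tag $d_i$ appears in only one of them. To raise the tag frequencies from one to three I would then add a family of auxiliary $4$-sets forming a $2$-regular $4$-uniform hypergraph on the $d_i$'s, possibly extended by a bounded number of extra fresh elements which, by symmetry, also end up with frequency three. When the necessary divisibility conditions on $3q$ are not met, I would first duplicate the \textsf{RX3C}-instance into two disjoint copies so that the incidence counts line up.

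For correctness, an exact cover $I$ of size $q$ in the source instance induces an exact cover in the \textsf{X4C}-instance by taking the corresponding main sets $\hat{T}_i$ for $i\in I$ together with a suitable selection of auxiliary sets covering the remaining $2q$ uncovered tags and all extra fresh elements; the existence of such a selection follows from the $2$-regularity of the auxiliary gadget. Conversely, any exact cover in the constructed instance must cover each element of the original universe through one of the $\hat{T}_i$, since these are the only sets that contain original elements, so the corresponding indices automatically form an exact cover in the \textsf{RX3C}-instance.

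I expect the main obstacle to be the careful design of the auxiliary gadget: it must be simultaneously the case that the construction is polynomial in the input size, that every element in the produced instance has frequency exactly three, and that no ``dishonest'' cover exists which avoids the main sets and still covers the original universe. Once a correct gadget is specified, the equivalence of yes-instances reduces to a straightforward counting argument ($4|I| = 3q + |I|$ forces $|I|=q$, and disjointness of the selected $\hat{T}_i$ follows automatically), and the reduction is clearly computable in polynomial time.
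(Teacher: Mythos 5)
Your plan (pad each triple with a tag to obtain $4$-sets, then repair the frequencies with auxiliary sets) leaves the entire difficulty inside the unspecified auxiliary gadget, and the one property you attribute to it is not justified and fails for the natural candidates. In the forward direction, after selecting the main sets $\hat{T}_i$, $i\in I$, the auxiliary sets must exactly cover precisely the $2q$ leftover tags $\{d_i : i\notin I\}$ together with all extra fresh elements, while avoiding the $q$ tags $\{d_i: i\in I\}$; which $2q$-subset of tags this is depends on the (unknown) exact cover $I$. This flexibility does not ``follow from $2$-regularity''. Indeed, a gadget private to a single tag $d$ is impossible outright: if its fresh elements $F$ occur only in gadget sets and can be exactly covered both without $d$ (when $\hat{T}_i$ is selected) and together with $d$ (when it is not), then $4$ divides both $|F|$ and $|F|+1$, a contradiction. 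Hence tags must share auxiliary sets, and then the cover/not-cover choices for different tags interfere (an auxiliary set containing two tags is unusable whenever exactly one of them needs external covering), so one has to construct, and verify, a coupling that works for every possible complement of an exact cover while keeping every tag in exactly two auxiliary sets and every fresh element in exactly three. That construction is the actual mathematical content of the reduction and is missing; you acknowledge this yourself by calling the gadget ``the main obstacle''. (Your converse direction is fine, since only the $\hat{T}_i$ contain original elements, and your choice of \textsf{RX3C} as a known NP-hard starting point is legitimate.)

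For comparison, the paper avoids any frequency-repair gadget: it reduces from the restricted $1$-in-$3$-satisfiability variant of Faliszewski, Hemaspaandra and Schnoor, in which every clause contains three distinct variables and every variable occurs in exactly four clauses. The clause--variable incidence structure is then already an \textsf{X4C} instance (clauses are the elements, each variable contributes the $4$-set of clauses containing it, so each element lies in exactly three sets), and exact covers correspond precisely to $1$-in-$3$ satisfying assignments. If you wish to keep \textsf{RX3C} as the source problem, you must either exhibit a concrete auxiliary gadget with the universal-flexibility property described above and prove it correct, or switch to a source problem whose incidence structure already carries the required size-$4$/frequency-$3$ regularity, as the paper does.
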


\begin{proof}
 We reduce from a special version of the 
\textsf{1-in-3}-satisfiability problem,  called \oitsatversion\ and
introduced in~\cite{FaHeSc}. 
An instance of \oitsatversion\ is a formula of the form 
$\varphi = \bigwedge_{i=1}^n \oit (x^i_1,x^i_2,x^i_3)$, where 
$\oit(x,y,z)$ is a clause which is true iff exactly one 
of the variables $x$, $y$, and $z$ is true. 
Additionally, $\varphi$ has the following properties: 
In each clause the $3$ appearing variables are distinct, and each 
variable appears in exactly $4$ clauses. 
Note that this implies that the number of distinct variables in 
$\varphi$ is $\frac34n$.

An instance $\varphi$ of the problem \oitsatversion\ can be transferred into 
an instance of \textsf{X4C} as follows:
\begin{itemize}
\item Each of the $n$ clauses in $\varphi$ becomes a set element 
      of $\{1,\ldots,n\}$, which we can then rename to values $1,\dots,n$.
\item Each variable $x_i$ becomes a set $S_i$ containing the 
      clauses in which $x_i$ appears.
\end{itemize}

\noindent
First assume that $\varphi$ is satisfiable. 
Then there is an assignment $I$ with $I\models\varphi$. 
Since $I$ satisfies exactly one variable in each clause, we know 
that $n$ variable occurrances are satisfied by $I$. 
Since each variable, in particular each of the satisfied variables, 
appears in $4$ clauses, we get that $\frac14n$ many variables 
are satisfied by $I$. 
We can naturally interpret $I$ as the set of indices $i$ with 
$I\models x_i$ and claim that $I$ satisfies the conditions 
of \textsf{X4C}. 
As mentioned above, since $\card{I}=\frac14n$, it suffices to 
show that $\bigcup_{i\in I} = \{1,\ldots,n\}$. 
This follows from the construction: 
Since $I$ (seen as a truth assignment to the variables) satisfies 
each clause, we know that for each clause, there is a variable 
satisfied by $I$. 
For the \textsf{X4C} instance, this implies tha for each element 
$i\in \{1,\ldots,n\}$, there is an index $j\in I$ with $i\in S_j$.

For the converse, assume that there is an index set $I$ satisfying 
the conditions of \textsf{X4C}. 
We can interpret $I$ as a truth assignment for the variables in 
$\varphi$ in the obvious way: 
A variable is set to $1$ iff its corresponding set is in the 
selection $I$. 
We show that $I$, seen as a truth assignment, satisfies the formula 
$\varphi$. 
Hence let $\oit(x^i_1,x^i_2,x^i_3)$ be a clause in $\varphi$. 
Since $I$ is a set cover, we know that for this clause, an element 
containing the set element corresponding to the clause is selected 
in $I$. 
Hence $I$ satisfies at least one of the variables $x^i_1$, $x^i_2$, 
and $x^i_3$. 
Since $I$ is an exact cover, we also know that each set element 
appears only in one of the selected sets, hence only one of the 
variables is true, and we are done. \hfill $\Box$
\end{proof}

\noindent
We can now show the main theorem of this section.

\begin{Theorem}
For Condorcet voting the constructive control problem by deleting 
voters is NP-hard if the uncovered alternatives are specified as
the winners.
\end{Theorem}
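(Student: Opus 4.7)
The plan is a polynomial-time many-one reduction from \textsf{X4C}, which is NP-hard by Lemma~\ref{lemma:x4c}. Given an \textsf{X4C} instance with sets $S_1,\dots,S_k \in 2^{\{1,\dots,n\}}$, the reduction will produce an election together with a distinguished alternative $a^*$ and a bound $\ell := n/4$, such that at most $\ell$ voters can be deleted so as to put $a^*$ into the upward uncovered set if and only if an exact cover exists.

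The constructed election is to have the alternative $a^*$, an \emph{element alternative} $b_j$ for each $j \in \{1,\dots,n\}$, a \emph{set alternative} $h_i$ for each $i \in \{1,\dots,k\}$, and a small number of auxiliary alternatives used only to control pairwise margins. Using the standard block-of-linear-orders technique, the preferences would be arranged to realise the following pairwise majorities in the full election: (i)~each $b_j$ beats $a^*$ by a margin much larger than $\ell$; (ii)~each $h_i$ beats $a^*$ by the single vote of a dedicated voter $v_i$, so that deleting just $v_i$ reverses the $h_i$--$a^*$ majority; (iii)~for every $i$ and every $j \in S_i$, $h_i$ beats $b_j$ by a margin much larger than $\ell$; (iv)~all remaining pairwise comparisons are arranged (using padding voters and the auxiliary alternatives) so that, after any deletion of at most $\ell$ voters, no $h_i$ still beating $a^*$ and no auxiliary alternative covers $a^*$.

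For the ``if'' direction, given an exact cover $I$ (necessarily of size $n/4$), deleting $Y := \{v_i : i \in I\}$ has size $\ell$ and flips exactly the $h_i$--$a^*$ comparisons with $i \in I$. For each element $j$, the unique $i \in I$ with $j \in S_i$ then provides an $h_i$ that beats $b_j$ but does not beat $a^*$, so $b_j$ does not cover $a^*$; by~(iv) no other alternative covers $a^*$ either, and $a^*$ ends up uncovered. Conversely, if some $Y$ with $\card{Y} \le \ell$ makes $a^*$ uncovered, then by~(i) each $b_j$ still beats $a^*$, so for each $j$ some alternative must beat $b_j$ and not $a^*$ in the reduced election; by~(iii) and~(iv) this alternative is necessarily an $h_i$ with $j \in S_i$, and the $h_i$--$a^*$ comparison can only have flipped if the designated voter $v_i$ lies in $Y$. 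Setting $I := \{i : v_i \in Y\}$ yields $\bigcup_{i \in I} S_i = \{1,\dots,n\}$ with $\card{I} \le n/4$; since $\card{S_i} = 4$ for every $i$, a counting argument forces $\card{I} = n/4$ together with pairwise disjointness, so $I$ is an exact cover.

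The principal obstacle I expect is engineering the preferences so that items~(ii) and~(iv) hold simultaneously: each $h_i$--$a^*$ margin must be exactly one vote (which is intrinsically fragile), yet every other relevant margin must be robust against any deletion of size $\le \ell$, and neither any $h_i$ nor any auxiliary alternative may cover $a^*$ before or after an arbitrary such deletion. This will likely require introducing, for each $h_i$, an auxiliary ``killer'' alternative that beats $h_i$ but never beats $a^*$, together with enough padding voters to keep every other majority fixed. Once this engineering is in place, the polynomial size of the construction and the two implications above are essentially routine bookkeeping.
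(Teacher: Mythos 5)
Your overall strategy---a reduction from \textsf{X4C} via Lemma~\ref{lemma:x4c}, with robust (unflippable) margins between $a^*$ and the element alternatives and fragile margins that encode the sets---is the same in spirit as the paper's, but the specific mechanism you propose for the fragile margins breaks the converse direction. You want each set alternative $h_i$ to beat $a^*$ by exactly one vote and you claim that this comparison ``can only have flipped if the designated voter $v_i$ lies in $Y$.'' No amount of padding can make that claim true: if the margin of $h_i$ over $a^*$ is $+1$, then about half of all voters rank $h_i$ above $a^*$, and deleting \emph{any} one of them (not just $v_i$) destroys the strict majority. Worse, a single deleted voter may rank many of the $h_i$ above $a^*$ and hence flip many of these margin-one comparisons simultaneously. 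Consequently, from $|Y|\le n/4$ you cannot conclude that the set $I$ of indices whose $h_i$ no longer beats $a^*$ has size at most $n/4$; you only obtain some (possibly large, overlapping) cover of $\{1,\dots,n\}$, whereas \textsf{X4C} demands an exact one. Under your stated conditions (i)--(iv), deleting a handful of voters who each rank all set alternatives above $a^*$ could already uncover $a^*$ even when no exact cover exists, so the reduction as sketched is unsound. This is not a routine engineering detail that ``killer'' alternatives or padding voters can fix: every deletion acts on all pairs $(h_i,a^*)$ at once and in the same direction, so fragility of these particular comparisons cannot be localized to dedicated voters.

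The paper's proof avoids this trap by never making the comparisons against $a^*$ the flippable ones. It introduces, for every element $j$, a pair $s_j,b_j$ (and no per-set alternatives); all comparisons involving $a^*$ and all cross pairs $b_i$ versus $s_j$ with $i\neq j$ have margins at least $\frac{1}{4}n+1$ and are frozen under at most $\frac{1}{4}n$ deletions, while $b_i$ beats $s_i$ by exactly $\frac{1}{4}n-3$. A voter introduced for a set $S_j$ helps $s_i$ against $b_i$ for the $n-4$ indices $i\notin S_j$ and hurts for the $4$ indices $i\in S_j$, and the budget identity $n(\frac{1}{4}n-2)=\frac{1}{4}n\,(n-8)$ shows that with only $\frac{1}{4}n$ deletions every $s_i$ can overtake its $b_i$ only if exactly $\frac{1}{4}n$ set-voters are deleted and each $i$ lies in exactly one of the corresponding sets, i.e., only if the deletions encode an exact cover. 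Your sketch has no analogue of this global counting argument; to repair it you would have to replace the dedicated-voter idea by margins and a voter structure that support such a budget argument, which is essentially what the paper's construction does.
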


\begin{proof}
We reduce from \textsf{X4C}, which is NP-hard due to Lemma~\ref{lemma:x4c}. 
So, let an \textsf{X4C}-instance consisting of the sets 
$S_1,\dots,S_{\frac34n}$ be given. 
Without loss of generality we assume $n \ge 16$. 
{From} the instance, we construct an election $E$ as follows.
First we define $t = \frac14n-2$ (recall that in every instance 
to \textsf{X4C}, $n$ is a multiple of $4$, hence $t$ is always 
an integer). 
Next we introduce alternatives $a^*$ (the alternative that has to win), 
$s_1,\dots,s_n$ and $b_1,\dots,b_n$.
Finally, we introduce the following four groups of individual
preferences, where
$S_{\neq i} = \set{s_j\ \vert\ j\neq i}$,
$B_{\neq i} =\set{b_j\ \vert\ j\neq i}$,
$B_{\notin S_i} = \set{b_j\ \vert\ j\notin S_i}$ and
$B_{\in S_i} = \set{b_j\ \vert\ j\in S_i}$.

\begin{itemize}
\item[1.] For each $i \in \set{1,\dots,n}$ we use $t$ linear strict orders of the 
      form $S_{\neq i} > s_i > b_i > B_{\neq i} > a^*$.
\item[2.] For each $i \in \set{1,\dots,n}$ we use $t$ linear strict orders of the 
      form $B_{\neq i} > a^* > s_i > b_i > S_{\neq i}$.
\item[3.] For each set $S_i$ we use a linear strict order of the form 
      $B_{\notin S_i} > a^* > S > B_{\in S_i}$.
\item[4.] We use a linear strict order of the form $a^* > S > B$.
\end{itemize}

\noindent
The notation of preferences (linear strict orders) using sets means that 
the order of the alternatives inside the sets is irrelevant. 
For instance, $a^* > S > B$ means that in the linear strict order $a^*$ is the 
greatest element, then the alternatives $s_1,\dots,s_n$ follow in any order 
and, finally, the alternatives $b_1,\dots,b_n$ follow, again in any 
order.
Note that, by definition of \textsf{X4C}, we get $\card{B_{\notin S_i}}=n-4$ 
and $\card{B_{\in S_i}}=4$.
Now, the question in our constructed instance of the control problem is 
whether the specific alternative $a^*$ can be made 
uncovered by deleting at most $\frac14n$ linear strict orders 
(i.e., voters).

We first study the relationship between each of the relevant alternatives
in the constructed election before any deletion of voters is performed.
Note that if the point difference between two alternatives is at 
least $\frac14n+1$, then deleting at most $\frac14n$ linear strict orders
cannot change which of these alternatives dominates the other.
\ifcompleteproof
\else
The following relationships can be verified with a 
case discinction:
\fi

\ifcompleteproof
\begin{description}
\else
\begin{itemize}
\fi

\pointsItem{Each $b_i$ beats $a^*$ with at least $\frac14n+1$ points.} 
\ifcompleteproof
     To see this, we consider all preferences introduced in the election. 
     For each $j\neq i$, the $2t$ linear strict orders of the first two groups
     place $b_i$ ahead of $a^*$. 
     {From} the linear strict orders introduced for $i$, one puts $b_i$ ahead 
     of $a^*$ and the other puts $a^*$ ahead of $b_i$. 
     We now consider the linear strict orders introduced for the sets $S_j$: 
     There are $3$ sets $S_j$ in which $i$ appears (these place 
     $a^*$ ahead of $b_i$) and $i$ does not appear in the remaining 
     $\frac34n-3$ many (these place $b_i$ ahead of $a^*$). 
     Finally, $a^* > S > B$ put $a^*$ ahead of $b_i$.
     Hence the lead of $b_i$ over $a^*$ is  
     $$
     (n-1)\cdot 2\cdot t+\frac34n-6-1=2(n-1)t+\frac34n-7
     $$
     which is at least $\frac14n+1$, since we assumed $n\ge 16$.
\fi

\pointsItem{Alternative $a^*$ beats each $s_i$ with at least $\frac14n+1$ points.}
\ifcompleteproof     
     Note that half of the linear strict orders introduced in the first two 
     groups place $s_i$ ahead of $a^*$ and the other half put 
     $a^*$ ahead of $s_i$. 
     Hence $a^*$ and $s_i$ tie in the sub-election consisting of 
     these linear strict orders. 
     In the $\frac34n$ linear strict orders introduced for the sets $S_i$, 
     however, $a^*$ is always placed ahead of $s_i$. 
     Finally, $a^*$ is ahead of $s_i$ in $a^* > S > B$. 
     As a consequence $a^*$ beats each $s_i$ with $\frac34n+1$ many points, 
     which is at least $\frac14n+1$.
\fi

\pointsItem{If $i\neq j$, then $b_i$ beats $s_j$ with at least $\frac 14n+1$ points.}
\ifcompleteproof
     To see that this is true, note that the linear strict orders introduced in the 
     first two groups are neutral between $b_i$ and $s_j$, as half 
     of them have $b_i$ ahead of $s_j$ and the other half have 
     $s_j$ ahead of $b_i$ (recall that $i\neq j$). 
     Now consider the linear strict orders introduced for the sets $S_i$. 
     There are $3$ such orders which place $s_j$ ahead of $b_i$ (the ones 
     corresponding to sets $S_l$ with $i\in S_l$), and the remaining 
     $\frac34n-3$ many place $b_i$ ahead of $s_j$ (these are the ones 
     corresponding to sets $S_l$ with $i\notin S_l$). 
     In $a^* > S > B$, $s_i$ is voted ahead of $b_i$. 
     Hence $b_i$ beats $s_j$ by $\frac34n-7$ points, which is at 
     least $\frac14n+1$, since again $n \ge 16$.
\fi

\pointsItem{Alternative $b_i$ beats $s_i$ with exactly $\frac14n-3$ points.}
\ifcompleteproof
     This holds due to the following: 
     The linear strict orders introduced in the first two groups for $j\neq i$ are 
     neutral with respect to the relationship between $s_i$ and 
     $b_i$ (half of them put $s_i$ ahead of $b_i$, the other half 
     put $b_i$ ahead of $s_i$). 
     The $2t$ many linear strict orders introduced for $i$ in the first two groups 
     all put $s_i$ ahead of $b_i$.

     Now we consider the linear strict orders introduced for the sets $S_j$. 
     If $i\in S_j$, then $s_i$ is ahead of $b_i$ here, this 
     happens $3$ times. 
     In the remaining $\frac34n-3$ linear strict orders introduced for the 
     sets $S_j$, 
     we have that $i\notin S_j$ and hence in these linear strict orders, $b_i$ is 
     ahead of $s_i$. 
     In $a^* > S > B$ the alternative $s_i$ is voted 
     ahead of $b_i$. 
     Together we have that $b_i$ beats $s_i$ with  
     $$
     -2t-3+\frac34n-3-1=\frac34n-2t-7
     $$
     votes. 
     Since $t=\frac14n-2$, it follows that $\frac34n-2t-7=\frac14n-3$ 
     as required.
\fi
\ifcompleteproof
 \end{description}
\else
 \end{itemize}
\fi

\noindent
In particular, it follows that by deleting at most $\frac14n$ voters, 
the only relevant relationships that can be influenced are those 
between $b_i$ and $s_i$ (we will see that the relationships between 
$b_i$ and $b_j$ or $s_i$ and $s_j$ for $i\neq j$ are not relevant).

We now show that the reduction is correct: 
The instance of \textsf{X4C} is positive iff $a^*$ can 
be made a winner of the election using the Condorcet criterion 
with uncovered set by deleting at most $\frac14n$ linear strict orders.

First, assume that the instance is positive, and let $I$ be a 
corresponding index set. 
We delete the $\frac14n$ linear strict orders corresponding to the 
elements in $I$ and denote the resulting election with $E'$. 
Then $a^*$ indeed is uncovered in $E'$. 
To show this, it suffices to prove that none of the $b_i$ covers 
$a^*$, since $a^*$ wins against all of the $s_i$ (since $a^*$ leads 
against $s_i$ with at least $\frac14n+1$ linear strict orders, this remains 
true also after deleting at most $\frac14n$ linear strict orders). 
Hence, assume that some $b_i$ covers $a^*$ in $E'$. 
It suffices to prove that $s_i$ dominates $b_i$ in $E'$, then, 
since $a^*$ dominates $s_i$ in $E'$, it follows that $b_i$ does 
not cover $a^*$. 
Note that in the original election $E$ the alternative $b_i$ beats $s_i$ 
with $\frac14n-3$ points. 
Deleting the $\frac14n$ linear strict orders corresponding to $I$ has the 
following effect:

\begin{itemize}
\item[a)] For the deleted linear strict orders corresponding to sets $S_j$ with 
      $i \notin S_j$, the alternative $s_i$ gains a point against $b_i$. 
      Since $i$ appears in exactly one of the chosen and 
      $\frac14n$ linear strict orders are deleted, this means that $s_i$ gains 
      $\frac14n-1$ points against $b_i$ from these linear strict orders.
\item[b)] For the single deleted linear strict order corresponding to a set $S_j$ with 
      $i\in S_j$, the alternative $s_i$ loses a point against $b_i$.
\end{itemize}

\noindent
Hence altogether, $s_i$ gains $\frac14n-2$ points against $b_i$
and, thus, now beats $b_i$ with a single point. 
Therefore, as claimed, $b_i$ does not cover $a^*$.

For the converse direction, assume that it is possible to make 
$a^*$ a winner of the election by deleting at most $\frac14n$ 
linear strict orders. 
Again, let 
$E'$ be the election resulting from $E$ by the deletions. 
Since the relatinship between the $b_i$'s and $a^*$ cannot be 
changed by deleting at most $\frac14n$ linear strict orders and $b_i$ wins 
against $a^*$ in the original election $E$, all $b_i$ also win 
against $a^*$ in $E'$.  
Since $a^*$ is a winner in $E'$, it follows that for each $b_i$ 
there must be a alternative dominating $b_i$ who does not 
dominate $a^*$. 
Since for $i \neq j$, we know that $b_i$ wins against $s_j$ in 
the election $E'$, it follows that for all relevant $i$ the alternative $s_i$ 
wins against $b_i$ in $E'$. 
Since $b_i$ wins against $s_i$ with $\frac14n-3$ points, it 
follows that each $s_i$ must gain at least $\frac14n-2$ 
points against $b_i$ by the removal of linear strict orders. 
Hence $n(\frac14n-2) = \frac14n^2-2n$ points need to be gained 
collectively by all $s_i$ against their corresponding $b_i$. 
Obviously, only deleting linear strict orders introduced for the sets $S_j$ 
helps to let $s_i$ gain points against $b_i$. 
Deleting one of these linear strict orders gains $n-8$ points (since it 
hurts for the $4$ values of $i$ with $i \in S_j$, and helps 
the remaining $n-4$ ones). 
Hence, by deleting $\frac14n$ linear strict orders we can gain at most 
$\frac14n\cdot(n-8) = \frac14n^2-2n$ points. 
Since this is the total number of points that need to be 
gained, we know that exactly $\frac14n$ linear strict orders are deleted to 
obtain the election $E'$, and each of these linear strict orders is one 
introduced for a set $S_j$. 
Now assume that there is some $i$ such that two linear strict orders 
corresponding to sets $S_{j_1}$ and $S_{j_2}$ are deleted, 
where $i \in S_{j_1}$ and $i \in S_{j_2}$ and $j_1 \neq j_2$. 
Then $s_i$ gains a point against $b_i$ in at most $\frac14n-2$ 
of the deletions and loses in at least $2$ of them. 
Hence, $s_i$ gains at most $\frac14n-4$ points against $b_i$ and this
implies that $s_i$ loses against $b_i$ in $E'$, which is a 
contradiction. 
Therefore, it follows that each $i$ is contained in at most 
one of the $S_j$ whose corresponding linear strict order is deleted. 
Due to cardinality reasons ($\frac14n$ linear strict orders corresponding 
to sets of $4$ elements each are deleted), it follows that 
each $i$ appears in exactly one set.
As a consequence, we have obtained a set cover as required. \hfill $\Box$
\end{proof}

\section{Conclusion}\label{Sec6} 

In this paper, we have demonstrated that the relation-algebraic approach can be 
used to solve \NP-hard problems from Social Choice Theory. 
In particular, this shows how Computer Algebra tools can be used to obtain
practical algorithms for hard problems without relying on domain knowledge for
optimizations. Our results support the point of view that proving \NP-hardness
is not sufficient in order to conclude that a voting system is ``safe'' from 
attempts to influence the outcome of an election.
In addition to the execution of algorithms, \RelView\ also provides us with 
visualizations of both the input and output of the algorithms and some further 
features that support scientific experiments, like step-wise execution, test 
of properties and generation of random relations. 
All this makes the approach especially appropriate for prototyping and 
experimentation, and as such very instructive scientific research as well as
for university education.

An interesting open question is whether similar problems from the Social Choice
literature, as for example the manipulation problem mentioned in the introduction,
can also be solved with \RelView\ or other Computer Algebra tools.



\end{document}


\begin{appendix}
\section{Proof of Lemma \ref{lemma:x4c}}

Since Lemma \ref{lemma:x4c} is mentioned in \cite{FaHeSc} without a
proof, we present a proof of it in this appendix.

\end{appendix}